\documentclass[dvipsnames,letterpaper,11pt]{article}
\usepackage[margin=1in]{geometry}
\usepackage{booktabs} 
\usepackage[ruled]{algorithm2e} 
\usepackage[hidelinks]{hyperref}
\usepackage{amsmath,amsthm}
\usepackage{cleveref}
\usepackage{stmaryrd}
\usepackage[longnamesfirst,sort]{natbib}
\usepackage[font={footnotesize}]{caption}
\usepackage{subcaption}
\usepackage{float}
\usepackage{dsfont}
\usepackage{cancel}
\usepackage{xcolor}
\usepackage{soul}
\usepackage{thmtools} 
\usepackage{thm-restate}
\usepackage{comment}

\SetAlFnt{\small}
\SetAlCapFnt{\small}
\SetAlCapNameFnt{\small}
\SetAlCapHSkip{0pt}
\IncMargin{-\parindent}
\usepackage{enumitem}
\usepackage{bbold}
\usepackage{xspace}
\usepackage{setspace}

\newcommand{\NN}{\mathbb{N}}

\newcommand{\EE}{{\bf E}}
\newcommand{\PP}{{\bf P}}

\newcommand{\calF}{\mathcal{F}}

\newcommand{\calJ}{\mathcal{J}}

\newcommand{\ALG}{\textnormal{ALG}}
\newcommand{\OPT}{\textnormal{OPT}}

\newcommand{\PNKref}{\text{\normalfont\hyperref[PNK]{[P$_{m,n,k}$]}}\xspace}
\newcommand{\DNKref}{\text{\normalfont\hyperref[DNK]{[D$_{m,n,k}$]}}\xspace}

\newcommand{\DP}{\textnormal{DP}}

\newtheorem{theorem}{Theorem}
\newtheorem{lemma}{Lemma}

\newtheorem{proposition}{Proposition}

\crefname{claim}{Claim}{Claims}
\crefname{obs}{Observation}{Observation}
\crefname{cons}{constraint}{constraints}
\crefname{eq}{equality}{equalities}
\crefname{eqn}{equation}{equations}
\crefname{ineq}{inequality}{inequalities}

\creflabelformat{cons}{(#2#1#3)}
\creflabelformat{eq}{(#2#1#3)}
\creflabelformat{eqn}{(#2#1#3)}
\creflabelformat{ineq}{(#2#1#3)}

\begin{document}

\title{Competition Versus Complexity in Multiple-Selection\\ Prophet Inequalities
\footnote{This work was partially funded by the ANID (Chile) through Grant FONDECYT 1241846.}
}
\author{Eugenio Cruz-Ossa
\thanks{School of Engineering, Pontificia Universidad Cat\'olica de Chile.}
\and Sebastian Perez-Salazar
\thanks{Department of Computational Applied Mathematics and Operations Research, Rice University, USA.}
\thanks{Ken Kennedy Institute, Rice University, USA.}
\and Victor Verdugo
\thanks{Institute for Mathematical and Computational Engineering, Pontificia Universidad Católica de Chile.} 
\thanks{Department of Industrial and Systems Engineering, Pontificia Universidad Cat\'olica de Chile.}
}
\date{}
\maketitle
\begin{abstract}
Competition complexity formalizes a compelling intuition: rather than refining the mechanism, how much additional competition is sufficient for a simple mechanism to compete with an optimal one? We begin the study of this question in multi-unit pricing for welfare maximization using prophet inequalities. An online decision-maker observes $m \geq k$ nonnegative values drawn independently from a known distribution, may select up to $k$ of them, and aims to maximize the expected sum of selected values. The benchmark is a prophet who observes a sequence of length $n \geq k$ and selects the $k$ largest values. We focus on the widely adopted class of single-threshold algorithms and fully characterize their $(1-\varepsilon)$-competition complexity. Notably, our results reveal a sharp competition-induced phase transition: in the absence of competition, single-threshold algorithms are fundamentally limited to a $1-1/\sqrt{2k\pi}$ fraction of the prophet value, whereas even a $1\%$ multiplicative increase beyond $n$ observations suffices to achieve a $1-\exp(-\Theta(k))$ fraction.
Another notable result happens when $k=1$: we show that the $(1-\varepsilon)$-competition complexity is exactly $\ln(1/\varepsilon)$, fully resolving an open question by Brustle et al. [Math. Oper. Res. 2024]. Our analysis is based on infinite-dimensional linear programming and duality arguments.
\end{abstract}

\thispagestyle{empty}

\section{Introduction}

Competition complexity, which traces back to the work of Bulow and Klemperer~\citep{bulow1996auctions}, quantifies the additional competition required for a suboptimal auction mechanism to achieve the revenue of an optimal mechanism. 
Recently, there has been a growing interest in the (approximate) competition complexity of prophet inequality problems~\citep{ezra2025competition,brustle2024competition,brustle2025competition}, mainly driven by the equivalence between prophet inequalities and posted price mechanisms; see, e.g.,~\cite{chawla2010multi,correa2019pricing,hajiaghayi2007automated}.

In the classic multiple-selection independent and identically distributed (i.i.d.)\ prophet inequality (see, e.g.,~\citep{hill1982comparisons,correa2021posted,brustle2025splitting,jiang2025tightness}), a decision-maker observes, in an online fashion, $n\geq 1$ nonnegative random values $X_1,\ldots,X_n$ drawn from a distribution $F$. Upon observing $X_i$, the decision-maker must irrevocably accept the value or reject it, with no possibility of recalling past rejected values. The decision-maker can select at most $k\geq 1$ values and aims to maximize the expected sum of the values chosen. The decision-maker can solve this problem optimally via dynamic programming, which also provides an optimal online algorithm and attains an expected value of $\DP_{n,k}(F)$. In contrast, a prophet decision-maker (or simply a prophet) observes all values in advance and selects the $k$ largest ones securing an expected value of $\OPT_{n,k}(F)$. Prophet inequalities aim to quantify the \emph{competitive ratio}, defined as the smallest fraction $\DP_{n,k}(F)/\OPT_{n,k}(F)$ over all distributions $F$ and $n$. This worst-case ratio measures the online decision-maker’s lack of foresight in the worst-case scenario. 
For $k=1$, it is known that this value is approximately $0.745$~\citep{correa2021posted} while for $k\geq 2$ the best known analytical lower bound is $1-e^{-k}k^k/k!\approx 1-1/\sqrt{2\pi k}$ (see, e.g.,~\citep{arnosti2023tight}) which is, in fact, attained by a single-threshold algorithm.

The classic (exact) competition complexity question in prophet inequalities instead seeks the \emph{smallest $m$} such that $\DP_{m,k}(F)\geq \OPT_{n,k}(F)$ for any distribution $F$. 
Namely, it asks for the \emph{minimum number of additional observations}, beyond a sequence of length $n$, that the online decision-maker must receive in order to secure an expected selected value at least the one obtained by the prophet. 
For $k=1$, Brustle et al.~\cite{brustle2024competition} show that this exact notion of competition complexity is impossible to satisfy: for any $m\geq n$, there exists a distribution $F$ such that $\DP_{m,1}(F)<\OPT_{n,1}(F)$. Consequently, they instead characterize the (approximate) $(1-\varepsilon)$-competition complexity, which seeks the smallest $m$ such that $\DP_{m,1}(F)\geq (1-\varepsilon)\OPT_{n,1}(F)$ for every $F$. 
They show that it is necessary and sufficient that $m=\phi(\varepsilon)\,n$, where $\phi(\varepsilon) = \int_0^1 (y(1-\ln y)-\varepsilon)^{-1} \mathrm{d}y =\Theta(\ln\ln(1/\varepsilon))$. Although the $(1-\varepsilon)$-competition complexity is well understood for $k=1$, such a characterization for the general multiple-selection setting $k\geq 2$ remains open---even for suboptimal but simpler classes of algorithms.

In the multiple-selection setting, the optimal online algorithm has the following structure: it computes $k \cdot m$ thresholds $\tau_{1,1},\ldots,\tau_{m,k}$ and accepts $X_i$ whenever $X_i \ge \tau_{i,j}$ and, at the time of its observation, there are $j$ remaining selections. A widely adopted subclass of algorithms consists of \emph{single-threshold} algorithms---where $\tau_{1,1}=\cdots=\tau_{m,k}$---due to its managerial interpretability, simplicity, and perceived fairness (see, e.g.,~\citep{arnosti2023tight,perez2025iid,chawla2024static}). 
This raises a natural question: \emph{how much additional competition is required when the decision-maker restricts attention to such simple single-threshold algorithms, rather than the more complex optimal online algorithm, in order to approximate the value of the prophet?} The results in~\cite{brustle2024competition} provides some insights for $k=1$, showing that any single-threshold algorithm requires $m\geq n$ observations such that $m/n=\Omega(\ln(1/\varepsilon))$ but leaving open the exact characterization (see Section~1.5 of~\cite{brustle2024competition}). 
Using the \emph{block model} in~\cite{brustle2025competition} one can also deduce that $c\ln(1/\varepsilon) \leq m/n \leq \log_2(1/\varepsilon)$ for some $c>0$. While this provides useful asymptotic insight into the scaling of $m$ with respect to $n$, it does not yield exact scalings that may be of practical relevance. 

\paragraph{Our contribution.}We provide a complete characterization of the $(1-\varepsilon)$-competition complexity for the multiple-selection i.i.d.\ prophet inequality, for all $k\ge 1$, within the class of single-threshold algorithms. As a by-product of our analysis, we find that for $k=1$, we must have $m/n=\ln (1/\varepsilon)$. Our results yield several notable insights. First, in the single-selection case, since $\ln (1/(1-0.745))\approx 1.37$, the best single-threshold algorithm achieves the same guarantees as the optimal multi-threshold algorithm while requiring only $37\%$ more observed values.
This substantially improves on the block-model estimates, which pessimistically suggest that achieving the same guarantees requires 97\% more.
Second, for $k>1$, even a small $1\%$ increase over $n$ ensures that the optimal single-threshold algorithm on a sequence of length $m$ secures a $1-\exp{(-\Theta(k))}$ fraction of the prophet’s value---in stark contrast with the case $m=n$, where the best known bound is $\approx 1-1/\sqrt{2\pi k}$. In line with the original insight of Bulow and Klemperer~\citep{bulow1996auctions}, our results in the multiple-selection setting suggest that expanding the pool of buyers has a larger impact on revenue than designing the optimal posted-price mechanism. Finally, our algorithmic approach is simple to implement and independent of $m$: given $n$ and $F$, it suffices to compute the value $x$ such that $F(x)=1-k/n$ and select the first $k$ observed values that are at least $x$. Our analysis relies on a quantile-based approach and a formulation of the competition complexity problem as an infinite-dimensional linear program, which we use to derive optimal guarantees. We formally introduce the problem in the next subsection and then present our technical results.

\subsection{Problem Formulation}

Let $\mathcal{F}$ be the class of continuous distributions over $[0,\infty)$. Given $F\in \mathcal{F}$ distribution, $n\geq 1$ and $k\in [n]$, we denote by $\OPT_{n,k}(F)=\sum_{i=1}^k\EE[X^{(i)}]$ the value of the prophet which is the sum of the $k$ largest values of $X_1,\ldots,X_n$, independently sampled from $F$. Here $X^{(i)}$ denotes the $i$-th largest ordered statistic. 
We focus on single-threshold algorithms: given $T\geq 0$, accept one-by-one at most $k$ values that are at least $T$. 
A quantile-based single-threshold algorithm is parameterized by a quantile $q\in[0,1]$, with the threshold $T$ chosen such that $F(T)=1-q$; for the family $\mathcal{F}$, the class of single-threshold algorithms coincides with the class of quantile-based single-thresholds.
We note that, for general distributions, our analysis still holds; however, we need to allow randomization by breaking ties at random. One way to achieve this is by adding a random noise to the values, which is a standard procedure in single-threshold problems, e.g.,~\cite{chawla2024static,perez2025iid} and in general prophet inequality problems, e.g.,~\citep{brustle2025splitting,jiang2025tightness}.

For $m\geq k$, let $\ALG_{m,k}(F,q)$ denote the value obtained by a single-threshold algorithm with parameter $q$ on a sequence $X_1,\ldots,X_m$ of length $m$. 
By a simple calculation, the algorithm's reward is given by
$\ALG_{m,k}(F,q) = \EE[\min\{ k,Y\}]\EE[X\mid X\geq F^{-1}(1-q)],$
where $Y$ is a binomial random variable with parameter $m$ and $q$, i.e., $\PP[Y=\ell]=\binom{m}{\ell}q^{\ell}(1-q)^{m-\ell}$ for $\ell\in \{0,\ldots,m\}$.
The single-threshold \emph{$(m,n,k)$-competitive ratio} is the worst-case ratio, over $F\in\mathcal{F}$, between the value of the best single-threshold algorithm and the value of the prophet, i.e.,
\[
\gamma_{m,n,k}=\inf_{F \in \calF} \sup_{q\in [0,1]} \frac{\ALG_{m,k}(F,q)}{\OPT_{n,k}(F)}.
\]
Given $\varepsilon\in [0,1]$, let $\beta_{k,n}(\varepsilon)=\inf_{m\geq k}\left\{m/n :  \gamma_{m,n,k} \geq 1-\varepsilon   \right\}$, i.e., the smallest possible scaling so the $(m,n,k)$-competitive ratio is at least $ 1-\varepsilon$ for a fixed number of values $n$. 
We are interested in characterizing the \emph{$(1-\varepsilon,k)$-competition complexity} defined as the smallest possible such scaling over every $n$, i.e., $\beta_k(\varepsilon) = \sup_{n \geq k} \beta_{k,n}(\varepsilon)$. 
We remark that for $\varepsilon=0$ (i.e., the algorithm's reward is required to be as good as the prophet), the impossibility result in~\cite{brustle2024competition} holds for the stronger optimal online algorithm, even in the case $k=1$. 
Since single-threshold algorithms are weaker, this impossibility applies here, i.e., the $(1,1)$-competition complexity is $\beta_1(0)=\infty$.

\subsection{Technical Overview of Our Results}

The first step in our analysis is to characterize exactly the $(m,n,k)$-competitive ratio.
To this end, in Section \ref{sec:LP-formulation} we provide an infinite-dimensional linear programming formulation whose optimal value exactly equals $\gamma_{m,n,k}$.
The LP construction departs from the fact that both $\ALG_{m,k}$ and $\OPT_{n,k}$ can be written linearly in quantile space, after performing the transformation $T=F^{-1}(1-q)$ for the thresholds; the LP characterization is summarized by Proposition \ref{thm:equivalence-PNK-principal}.
Equipped with the linear program, our next step is to solve it to optimality.
In general, infinite-dimensional LPs do not necessarily satisfy strong duality, and therefore, using the existing finite-dimensional machinery directly is not possible.
However, in our case, we establish a dual program in Section~\ref{sec:strong-duality} and prove that it is in strong duality with the LP formulation of $\gamma_{m,n,k}$.
To show this, we first establish that weak-duality holds between the pair (Proposition~\ref{prop:weak-duality}), and then we exhibit explicit primal and dual solutions with equal objective value.
Then, we conclude strong-duality by showing primal and dual feasibility for the solutions (Theorem~\ref{thm:optimal-solution-PNK}).
While dual feasibility is simple to check, primal feasibility is technically involved (Lemma~\ref{prop:supremum-q-d}), and Section \ref{subsec:proof-supremum-q-d} is devoted to this end.

We remark that in Theorem~\ref{thm:optimal-solution-PNK} we are able to get the exact value of $\gamma_{m,n,k}$ by a closed formula that can be easily evaluated.
In particular, using this formula we can directly recover the best possible factors in the absence of competition, i.e., $m=n$: $\gamma_{n,n,1}=1-(1-1/n)^n\to 1-1/e$, and for $k>1$ we get $\gamma_{n,n,k}\approx 1-1/\sqrt{2\pi k}$ by the Stirling's approximation. 
An algorithmic insight of our characterization is that for every $m$, the best single-threshold is attained at the quantile $k/n$.
Using our exact characterization, in Section \ref{sec:CCbounds} we get explicit competition complexity bounds for $\beta_{k}(\varepsilon)$ (Theorem~\ref{thm:betak-epsilon-bounds}).
Our bounds imply that attracting any additional constant fraction of buyers suffices to break the $1-1/\sqrt{2\pi k}$ barrier and achieve a $1-\exp(-\Theta(k))$ fraction of the prophet, i.e., we exhibit a transition on the competitive ratio by introducing competition. 

{To showcase the effectiveness of competition, in  Table~\ref{tab:comparison_competitive_ratios}, we show how much extra values we need over $n=1000$ so we can actually match the best known upper bounds on the competitive ratio for the optimal multi-threshold algorithm in~\cite{brustle2025splitting} for small values of $k$. 
As we can see, it suffices with $\approx 37.6\%$ when $k=1$, and we just need $\approx 24.4\%$ when $k=5$.
The values in the second row of Table~\ref{tab:comparison_competitive_ratios} are obtained by evaluating $\beta_{n,k}(\varepsilon)$ with $n=1000$ and $\varepsilon$ equal to the upper bound gap for the corresponding $k$, e.g., for $k=1$ we have $\varepsilon=1-0.7474$.
We note that our bounds are obtained using the arguably simplest possible policy, whereas the optimal online algorithm requires multi-threshold strategies, in addition to the intrinsic difficulties of computing these thresholds. }
\begin{table}[htbp]
  \centering
  \begin{tabular}{l|ccccc}
    \hline
     & $k=1$ & $k=2$ & $k=3$ & $k=4$ & $k=5$ \\
    \hline
    Upper bound~\citep{brustle2025splitting} & 0.7474 & 0.8372 & 0.8742 & 0.8949 & 0.9086 \\
    Comp. complexity & 1.376 & 1.330 & 1.293 & 1.265 & 1.244 \\
    \hline
  \end{tabular}
  \caption{Upper bounds on the optimal multi-threshold algorithm (first row) and $\beta_{1000,k}$ (second row).}
  \label{tab:comparison_competitive_ratios}
\end{table}

\subsection{Related Work}

Prophet inequalities were introduced by Krengel and Sucheston~\citep{krengel1977semiamarts}. In the general single-selection case, the competitive ratio is $1/2$, and it is attained by a single-threshold algorithm~\citep{samuel1984comparison,wittmann1996superprophet,kleinberg2019matroid}. 
In recent decades, prophet inequalities have attracted considerable attention due to their applicability in mechanism design, as approximation guarantees for posted-price mechanisms are equivalent to competitive ratios in prophet inequality problems in several settings~\citep{chawla2010multi,correa2019pricing}; see~\citep{correa2019recent} for a recent survey. For the i.i.d.\ prophet inequality problem introduced by Hill and Kertz~\citep{hill1982comparisons}, while the competitive ratio for $k=1$ case is fully characterized by the works~\citep{correa2021posted,hill1982comparisons,kertz1986stop}, the exact competitive ratio for multiple-selection remains open. 
Recent works have provided sharp bounds on the competitive ratio~\cite{jiang2025tightness,brustle2025splitting}, although the best known analytical lower bound is $1-1/\sqrt{2\pi k}$ via a single-threshold algorithm; e.g.,~\citep{arnosti2023tight,chakraborty2010approximation,beyhaghi2021improved}.

Our work examines the power of fixed-price mechanisms versus optimal mechanisms through the lens of prophet inequalities, placing it within the broader literature on simple versus optimal mechanisms~\citep{hartline2009simple}. Similar ideas have been explored extensively in auction theory~\citep{feldman201899,eden2016competition,beyhaghi2019optimal,bulow1996auctions}. Most closely related to our work are recent studies of competition complexity in single-selection prophet inequality settings~\citep{brustle2024competition,brustle2025competition,ezra2025competition,ezra2024competition}; we extend this line of research to the multiple-selection setting. 

The quantile approach for i.i.d.\ prophet inequalities (see, e.g.,~\citep{perez2025iid,correa2021posted,brustle2025splitting}) allows the competitive ratio to be reformulated as a linear program~\citep{perez2025iid,brustle2025splitting}, which is closely related to the formulations used in this work. These programs are typically challenging to solve for the class of optimal algorithms, and prior work often provides bounds rather than exact optimal solutions. In contrast, we obtain exact solutions to our formulations, in part due to our restriction to single-threshold algorithms. 
The work of~\cite{brustle2024competition}, which initiated the study of competition complexity in prophet inequality problems, employs a convex program in the value space to characterize competitive ratios, which is then used to analyze competition complexity. Their formulation relies on a sequence of nontrivial reductions (see also~\cite{perez2024optimal} for a similar approach in a related model). However, it is unclear how these reductions could be adapted to single-threshold algorithms, whereas our linear-programming framework bypasses these difficulties, and optimality follows directly from duality arguments.

\section{A Linear Programming Formulation}\label{sec:LP-formulation}
In this section, we introduce the main technical tool that enables us to establish our competition complexity result: we provide a linear-programming characterization of the $(m,n,k)$-competitive ratio.
To this end, given $k,m\in \NN$ with $m\ge k$, consider the function $q\mapsto Q_{m,k}$ such that
$$Q_{m,k}(q) = k - \sum_{\ell = 0}^{k-1} (k - \ell) \binom{m}{\ell} q^\ell (1-q)^{m-\ell}.$$
In the next simple proposition, we provide useful expressions for the values attained by single-threshold algorithms and the prophet, as well as some properties of the function $Q_{m,k}$.
\begin{proposition}\label{prop:expected-value-algorithm}
    Given $F \in \calF$, $q \in (0,1)$, $k \in \NN$, and $n,m \in \NN$ such that $n,m \geq k$, the following holds with $f(u) = F^{-1}(1-u)$:
    \begin{enumerate}[label=\normalfont(\alph*)]
    \item $\ALG_{m,k}(F,q) = ({Q_{m,k}(q)}/{q}) \int_0^q f(u) du$.\label{characterization-sol-alg}
    \item $\OPT_{n,k}(F) = \int_0^1 g_{n,k}(u) f(u) \, \mathrm{d}u,$
    where $g_{n,k}(u) = \sum_{\ell = n - k + 1}^n \ell \binom{n}{\ell} (1-u)^{\ell-1} u^{n-\ell}$.\label{characterization-sol-opt}
          \item When $Y \sim \normalfont\text{Binomial}(m, q)$, we have
      $Q_{m,k}(q) = k - \EE[(k-Y)_+] = \EE[\min\{k, Y\}].$\label{characterization-sol-a}
      \item $Q_{m,k}'(q) = m \sum_{i=0}^{k-1} \binom{m-1}{i} q^i (1-q)^{m-i-1} \geq 0$ for $q \in [0,1]$, i.e., $Q_{m,k}$ is increasing.\label{characterization-sol-b}
      \item $Q_{m,k}''(q) = -m(m-1)\binom{m-2}{k-1}q^{k-1}(1-q)^{m-k-1} \leq 0$ for $q \in [0,1]$, i.e., $Q_{m,k}$ is concave.\label{characterization-sol-c}
    \end{enumerate}
\end{proposition}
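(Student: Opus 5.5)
We prove the five items in the order (c), (a), (b), (d), (e), since (a) uses (c). Throughout, the quantile transform is the key tool: if $U\sim\text{Uniform}[0,1]$, then $f(U)=F^{-1}(1-U)$ has distribution $F$. Continuity of $F$ guarantees that $\{X\ge F^{-1}(1-q)\}$ corresponds exactly to $\{U\le q\}$ and has probability $q$.

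For \ref{characterization-sol-a}, write $Y\sim\text{Binomial}(m,q)$. The identity $\min\{k,Y\}=k-(k-Y)_+$ holds pointwise, and $(k-Y)_+$ vanishes unless $Y\le k-1$. Taking expectations gives $\EE[(k-Y)_+]=\sum_{\ell=0}^{k-1}(k-\ell)\binom{m}{\ell}q^\ell(1-q)^{m-\ell}$, so $Q_{m,k}(q)=k-\EE[(k-Y)_+]=\EE[\min\{k,Y\}]$.

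For \ref{characterization-sol-alg}, we start from the formula $\ALG_{m,k}(F,q)=\EE[\min\{k,Y\}]\,\EE[X\mid X\ge F^{-1}(1-q)]$ stated in the problem formulation. To justify it briefly: the number of values exceeding the threshold is $Y\sim\text{Binomial}(m,q)$. By independence, conditional on which indices exceed the threshold, each accepted value is distributed as $X$ conditioned on $X\ge T$. Wald's identity then applies to the first $\min\{k,Y\}$ of them. The conditional expectation equals $\frac1q\EE[f(U)\mathbb{1}\{U\le q\}]=\frac1q\int_0^q f(u)\,du$. Combining this with \ref{characterization-sol-a} gives the claim.

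For \ref{characterization-sol-opt}, write $X_i=f(U_i)$ with $U_i$ i.i.d.\ uniform. Since $f$ is nonincreasing, the $i$-th largest $X$ equals $f(U_{(i)})$, where $U_{(i)}$ is the $i$-th smallest uniform. The density of $U_{(i)}$ is $n\binom{n-1}{i-1}u^{i-1}(1-u)^{n-i}$, so $\OPT_{n,k}(F)=\int_0^1 f(u)\sum_{i=1}^k n\binom{n-1}{i-1}u^{i-1}(1-u)^{n-i}\,du$. To match $g_{n,k}$, we substitute $\ell=n-i+1$ and use $n\binom{n-1}{n-\ell}=\ell\binom{n}{\ell}$. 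The range $i=1,\dots,k$ then becomes $\ell=n-k+1,\dots,n$, and the term becomes $\ell\binom{n}{\ell}(1-u)^{\ell-1}u^{n-\ell}$, as required. Exchanging sum and integral is justified by nonnegativity.

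For \ref{characterization-sol-b}, we use the standard fact $\frac{d}{dq}\PP[Y\le j]=-m\binom{m-1}{j}q^j(1-q)^{m-1-j}$, which follows from a telescoping computation or from the beta-integral representation of the binomial CDF. Since $\EE[(k-Y)_+]=\sum_{j=0}^{k-1}\PP[Y\le j]$, differentiating gives $Q'_{m,k}(q)=m\sum_{j=0}^{k-1}\binom{m-1}{j}q^j(1-q)^{m-1-j}$, which is manifestly nonnegative. This sum equals $m\,\PP[Y'\le k-1]$ with $Y'\sim\text{Binomial}(m-1,q)$.

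For \ref{characterization-sol-c}, we apply the same derivative fact to $Y'$: $Q''_{m,k}(q)=-m(m-1)\binom{m-2}{k-1}q^{k-1}(1-q)^{m-k-1}\le0$. This step requires $m\ge2$; the case $m=k=1$, where $Q_{1,1}(q)=q$, is trivial. The only genuinely computational step is the binomial-CDF derivative identity, and it is classical.
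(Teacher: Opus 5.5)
Your proposal is correct. For (a) and (c) it follows the paper's argument: the pointwise identity $\min\{k,Y\}=k-(k-Y)_+$, and the factorization $\EE[\min\{k,Y\}]\cdot\EE[X\mid X\ge T]$ combined with the quantile change of variables.

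It departs from the paper in two places.

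\textbf{Part (b).} The paper only cites Brustle et al. You give a self-contained derivation through $X^{(i)}=f(U_{(i)})$ and the $\mathrm{Beta}(i,n-i+1)$ density of uniform order statistics. The reindexing $\ell=n-i+1$ together with $n\binom{n-1}{\ell-1}=\ell\binom{n}{\ell}$ is correct.

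\textbf{Parts (d) and (e).} The paper writes $Q_{m,k}=\sum_{i=1}^k\PP[Y\ge i]$ and carries out two explicit telescoping computations. You instead write $Q_{m,k}=k-\sum_{j=0}^{k-1}\PP[Y\le j]$ and invoke the classical identity $\frac{d}{dq}\PP[\mathrm{Bin}(m,q)\le j]=-m\binom{m-1}{j}q^j(1-q)^{m-1-j}$. You then observe that $Q'_{m,k}(q)=m\,\PP[\mathrm{Bin}(m-1,q)\le k-1]$, so (e) follows from one more application of the same identity. Your route is shorter and more conceptual: a single lemma is used twice, and writing $Q'_{m,k}/m$ as a binomial CDF makes both monotonicity and concavity transparent. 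The paper's route is fully elementary and avoids the Beta-integral representation; its telescoping effectively reproves that identity inline.

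\textbf{Edge-case remark.} The second application of the identity, at $j=k-1$ for $\mathrm{Bin}(m-1,q)$, requires $k-1\le m-2$, i.e.\ $m\ge k+1$, not merely $m\ge 2$. So the excluded case is $m=k$ in general, not just $m=k=1$. It is still trivial: then $\PP[\mathrm{Bin}(m-1,q)\le m-1]=1$, so $Q_{k,k}(q)=kq$ and $Q''_{k,k}\equiv 0$. This agrees with the stated formula for $q<1$, because the prefactor $m(m-1)\binom{m-2}{k-1}$ vanishes when $m=k$ under the usual conventions.
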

\begin{proof}
We have $\ALG_{m,k}(F,q)=\EE[\min\{k,Y\}]\EE[X\mid X > F^{-1}(1-q)]$, where $Y\sim \normalfont\text{Binomial}(m, q)$.
For the second term in the product, we have $ \EE[X\mid X > F^{-1}(1-q)]=({1}/{q}) \int_0^q f(u) du$ by direct integration and changing variables.
    For the first term, we have
    \begin{align*}
        \EE[\min\{k,Y\}] &= k\cdot \PP[ Y \geq k] + \sum_{\ell = 0}^{k-1} \ell\cdot \PP[Y = \ell]\\
        &= k - \sum_{\ell = 0}^{k-1} (k - \ell) \binom{m}{\ell} q^\ell (1-q)^{m-\ell}= Q_{m,k}(q),
    \end{align*}
    which concludes part \ref{characterization-sol-alg}.
    For point \ref{characterization-sol-opt}, a proof can be found in, e.g., \cite[Proposition 1]{brustle2025splitting}.
For part \ref{characterization-sol-a}, the first equality holds by identifying the expected value of the positive part, and the second equality is by using the equality $\min\{k, Y\} = k - (k - Y)_+$. 
To show \ref{characterization-sol-b} and \ref{characterization-sol-c}, first note that $k \leq m$ and $\min\{k, Y\} = \sum_{i = 1}^{k} \mathds{1}_{\{Y \geq i\}}$, therefore,
$Q_{m,k}(q) = \mathbb{E}\left[\min\{k, Y\}\right] = \sum_{i = 1}^{k} \PP[Y \geq i]=g_i(q)$, where
$g_i(q) =\sum_{\ell = i}^{m} \binom{m}{\ell} q^{\ell} (1-q)^{m - \ell}$.
From here, by taking derivatives and rearranging the telescopic summations, we get the following expressions for $Q_{m,k}'$ and $Q_{m,k}''$:
\begin{align*}
Q_{m,k}'(q) = \sum_{i=1}^{k} g_i'(q)&=\sum_{i=1}^{k}\sum_{\ell = i}^{m} \left[ \binom{m}{\ell} \ell q^{\ell-1} (1-q)^{m - \ell} - \binom{m}{\ell} (m - \ell) q^{\ell} (1-q)^{m - \ell - 1} \right]\\
&= m\sum_{i=1}^{k}\sum_{\ell = i}^{m} \left[ \binom{m - 1}{\ell - 1} q^{\ell-1} (1-q)^{m - \ell} - \binom{m - 1}{\ell} q^{\ell} (1-q)^{m - \ell - 1} \right]\\
  &= m \sum_{i=0}^{k-1} \binom{m - 1}{i} q^i (1 - q)^{m - (i+1)},\\
Q_{m,k}''(q)&= m \sum_{i=0}^{k-1}\binom{m - 1}{i} \left( iq^{i-1}(1-q)^{m - (i+1)} - (m - 1 - i)q^{i}(1-q)^{m - (i+2)} \right)\\
&= m (m - 1) \left[ \sum_{i=1}^{k-1}\binom{m - 2}{i - 1}q^{i-1}(1-q)^{m - i - 1} - \sum_{i=0}^{k-1}\binom{m - 2}{i}q^{i}(1-q)^{m - i - 2} \right]\\
&=-m (m - 1) \binom{m - 2}{k-1}q^{k-1}(1-q)^{m - k - 1},
\end{align*}
concluding \ref{characterization-sol-b} and \ref{characterization-sol-c}.
\end{proof}

\paragraph{The LP formulation.}Having expressions for both $\ALG_{m,k}(F,q)$ and $\OPT_{n,k}(F)$ in terms of $q\mapsto f(q) = F^{-1}(1-q)$, we formulate the $(m,n,k)$-competitive ratio as the following infinite-dimensional linear program over the space of non-increasing non-negative functions $f$: 
\begin{align}
\inf \quad  &d \tag*{\normalfont{\mbox{[P$_{m,n,k}$]}}}\label{PNK}\\
\text{s.t.} \quad & d \geq \frac{Q_{m,k}(q)}{q} \int_{0}^q f(u) du \quad \text{for all } q \in [0,1],  \label{PNK1}  \\
& \quad \int_{0}^1 f(u) g_{n,k}(u) du = 1, \label{PNK2}  \\
& \quad f \geq 0 \text{ and non-increasing.}  \label{PNK4}
\end{align}
Constraint \eqref{PNK2} represents a normalized prophet value, while the first constraint \eqref{PNK1} corresponds to the linearized objective obtained by adding the variable $d$.
The characterization is summarized in the following proposition.
\begin{proposition}\label{thm:equivalence-PNK-principal}
  For every $k \in \NN$ and $n,m \in \NN$ such that $n,m \geq k$, the optimal value of \PNKref is the $(m, n, k)$-competitive ratio $\gamma_{m,n,k}$.
\end{proposition}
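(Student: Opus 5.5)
We prove the two inequalities between the optimal value of \PNKref and $\gamma_{m,n,k}$ separately. Both rest on the correspondence between distributions $F\in\calF$ and their quantile functions $f(u)=F^{-1}(1-u)$, together with the formulas in Proposition~\ref{prop:expected-value-algorithm}. The key point is that the ratio $\ALG_{m,k}(F,q)/\OPT_{n,k}(F)$ is invariant under scaling $F$ (i.e., replacing $X$ by $cX$ for $c>0$). So we may normalize the prophet value to $1$, which is exactly constraint~\eqref{PNK2}, and the ratio becomes linear in $f$.

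\emph{Step 1: $\mathrm{val}\,\PNKref \le \gamma_{m,n,k}$.} Fix $F\in\calF$ with $0<\OPT_{n,k}(F)<\infty$. Distributions with infinite prophet value are excluded by convention, and the degenerate case $F=\delta_0$ is not continuous. Set $f(u)=F^{-1}(1-u)/\OPT_{n,k}(F)$. This $f$ is nonnegative and non-increasing, and by Proposition~\ref{prop:expected-value-algorithm}\ref{characterization-sol-opt} it satisfies~\eqref{PNK2}. Now set $d=\sup_{q\in[0,1]}\ALG_{m,k}(F,q)/\OPT_{n,k}(F)$. By part~\ref{characterization-sol-alg}, constraint~\eqref{PNK1} holds for $q\in(0,1)$. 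At $q=1$ it holds by continuity. At $q=0$ the right-hand side is read as its limit $Q_{m,k}(0^+)\cdot f(0^+)\cdot 0$-type expression; since $Q_{m,k}(q)\approx mq$ near $0$, this limit is $0$ provided $f$ is integrable, and the constraint is trivial. Hence every $F$ yields a feasible point with objective equal to its best single-threshold ratio, and taking the infimum over $F$ gives the inequality.

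\emph{Step 2: $\mathrm{val}\,\PNKref \ge \gamma_{m,n,k}$.} Take any feasible $(d,f)$. Then $f$ is integrable against $g_{n,k}$, and since $g_{n,k}>0$ on $(0,1)$, $f$ is finite on $(0,1]$. Modify $f$ on a null set to make it left-continuous; this changes no integral. The monotone function $f$ then defines a distribution $F$ on $[0,\infty)$ through $X=f(U)$ with $U$ uniform on $[0,1]$, and its quantile function agrees with $f$ almost everywhere. The obstacle is that $F$ may fail to be continuous, since $f$ may have flat pieces that create atoms. To handle this, approximate $f$ by strictly decreasing functions $f_\delta=f+\delta h$, where $h$ is strictly decreasing, positive, and bounded (e.g.\ $h(u)=1-u/2$), and renormalize so that~\eqref{PNK2} holds. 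The resulting $F_\delta$ is continuous because its quantile function is strictly decreasing, and it has no jumps on an interval because any jumps of $f_\delta$ correspond to gaps in the support, which are harmless. By Proposition~\ref{prop:expected-value-algorithm}, $\sup_q \ALG_{m,k}(F_\delta,q)/\OPT_{n,k}(F_\delta)\le (d+\delta\sup_q Q_{m,k}(q)\|h\|_\infty)/(1+\delta c)$, where $c=\int h g_{n,k}>0$ and $Q_{m,k}\le k$. Therefore $\gamma_{m,n,k}\le d+O(\delta)$, and letting $\delta\to0$ and then taking the infimum over feasible points gives the inequality.

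The main technical care lies in Step 2. First, we must check that the quantile of $F_\delta$ coincides with $f_\delta$ almost everywhere, which is enough since only integrals of $f$ appear. Second, the algorithm class is taken over all $q\in[0,1]$, and these correspond exactly to thresholds when $F$ is continuous. Finally, the endpoint conventions at $q=0$ in~\eqref{PNK1} must be handled, which is done by continuity exactly as in Step 1.
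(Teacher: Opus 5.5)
Your proof is correct and follows the paper's route. In one direction you normalize the prophet value to turn each $F$ into a feasible point. In the other you realize a feasible $f$ as a quantile function and approximate it by a strictly decreasing function, so that the resulting distribution is continuous. Your explicit perturbation $f+\delta h$, with renormalization and the bound $(d+\delta k\|h\|_\infty)/(1+\delta c)$, makes concrete the approximation step that the paper only sketches by appealing to smooth strictly decreasing approximations.
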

\begin{proof}
We denote by $v$ the value of the $(m,n,k)$-competitive ratio.
Let $F \in \calF$ be any distribution and consider $h(u)=F^{-1}(1-u)$.
Define $(f,d_F)$ as follows: $f(u) = {h(u)}/{\int_0^1 g_{n,k}(z) h(z) dz}$ and let
$d_F = \sup_{q \in \left[0,1\right]} ({Q_{m,k}(q)}/{q})\int_{0}^q f(u) du.$
We have that $(f, d_F)$ is feasible for \ref{PNK}. 
Indeed, note that $f$ is non-increasing and non-negative, since $h$ is non-increasing and non-negative as the inverse of a distribution with positive support, i.e., \eqref{PNK4} are satisfied.
Furthermore,
\[
\int_0^1 f(u) g_{n,k}(u) du = \int_0^1 \frac{h(u)}{\int_0^1 g_{n,k}(z) h(z) dz} \cdot g_{n,k}(u) du = 1,
\]
therefore \eqref{PNK2} holds, and \eqref{PNK1} is satisfied by construction.
On the other hand, we have
\begin{align*}
d_F &= \sup_{q \in \left[0,1\right]}\frac{Q_{m,k}(q)}{q}\cdot  \frac{ \int_{0}^q f(u) du}{\int_0^1 g_{n,k}(u) f(u) du}\\
&=\sup_{q \in \left[0,1\right]}\frac{Q_{m,k}(q)}{q}\cdot  \frac{ \int_{0}^q h(u) du}{\int_0^1 g_{n,k}(u) h(u) du} = \sup_{q \in \left[0,1\right]}\frac{\ALG_{m,k}(F,q)}{\OPT_{n,k}(F)},
\end{align*}
where the last equality holds by Proposition \ref{prop:expected-value-algorithm}\ref{characterization-sol-alg}-\ref{characterization-sol-opt}.
By taking the infimum over $F\in \calF$, we conclude that the optimal value of \ref{PNK} is at most $v$.

We will show next that the converse inequality also holds.
Consider a feasible solution $(f,d)$.
Let $Q$ be a random value uniformly distributed in $[0,1]$ and let $F_f$ be the cumulative distribution function of $f(Q)$. 
Since $f$ is a non-increasing function, by standard properties of quantile functions (see, e.g.,~\cite[Chapter~II]{devroye1986non}) we have
$f(x) = F_f^{-1}(1-x)$ for every $x \in [0,1]$.
Then, by Proposition \ref{prop:expected-value-algorithm},
\[
\OPT_{n,k}(F_f) = \int_0^1 g_{n,k}(u) F_f^{-1}(1-u) du = \int_0^1 g_{n,k}(u) f(u) du=1,
\]
\[
\ALG_{m,k}(F_f,q) = \frac{Q_{m,k}(q)}{q} \int_{0}^q F_f^{-1}(1-u) du = \frac{Q_{m,k}(q)}{q} \int_{0}^q f(u) du,
\]
where in the computation of $\OPT_{n,k}$ we also used that $(f,d)$ satisfies constraint \eqref{PNK2}.
Therefore,
\[
\sup_{q\in [0,1]}\frac{\ALG_{m,k}(F_f,q)}{\OPT_{n,k}(F)} = \sup_{q\in [0,1]}\frac{Q_{m,k}(q)}{q} \int_{0}^q f(u) du\le d,
\]
where the inequality holds since $(f,d)$ is feasible for \ref{PNK}.
By taking the infimum over feasible solutions $(f,d)$, we have 
\begin{align*}
v&\le \inf\left\{\sup_{q\in [0,1]}\frac{\ALG_{m,k}(F_f,q)}{\OPT_{n,k}(F)}:(f,d)\text{ feasible for }\ref{PNK}\right\}\\
&\le \inf\Big\{d:(f,d)\text{ feasible for }\ref{PNK}\Big\}=\text{optimal value of }\ref{PNK},
\end{align*}
{where the first inequality holds since $f$ is integrable, and then it can be approximated by a smooth and strictly decreasing function $\hat{f}$ (see, e.g., Chapter 3 in~\cite{rudin1987real}); from here, it is possible to find a continuous distribution $\hat F\in \calF$ such that $\hat{f}(q)=\hat F^{-1}(1-q)$ and $\sup_{q\in [0,1]} \ALG_{m,k}(\hat F,q) \leq d +\varepsilon$ with $\varepsilon>0$ arbitrarily small.
We conclude that $v$ is equal to the optimal value of \ref{PNK}.}
\end{proof}

\section{Solving the Linear Program via Duality}\label{sec:strong-duality}
We remark that establishing the $(m,n,k)$-competitive ratio is, by Proposition \ref{thm:equivalence-PNK-principal}, equivalent to computing the optimal value of \ref{PNK}.
In what follows, we derive a closed-form for the optimal value of \ref{PNK} by explicitly constructing an optimal solution via using strong duality. 
We start by formulating a dual problem, and we establish that weak duality holds between the pair of programs (Proposition \ref{prop:weak-duality}).
We then exhibit primal-dual feasible solutions with equal objective values, thereby establishing strong duality. 
The main technical step is establishing primal feasibility~(Lemma~\ref{prop:supremum-q-d}).
We remark that strong duality does not hold in general for infinite-dimensional LP, which, in particular, prevents the use of the finite-dimensional duality machinery in a direct way.
Our result is summarized in the following theorem.
\begin{theorem}\label{thm:optimal-solution-PNK}
  For every $k \in [n]$ and $n,m \in \NN$ such that $n,m \geq k$, the $(m,n,k)$-competitive ratio is 
  \begin{equation*}\frac{Q_{m,k}(k/n)}{k}=\frac{1}{k}\left(k - \sum_{\ell = 0}^{k-1} (k-\ell)\binom{m}{\ell} \left( \frac{k}{n} \right)^{\ell} \left( 1 - \frac{k}{n} \right)^{m - \ell}\right).
  \end{equation*}
\end{theorem}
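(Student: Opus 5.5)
The proof has two directions: a lower bound on $\gamma_{m,n,k}$ from the single quantile $q=k/n$, and a matching upper bound from an explicit near-extremal feasible solution of \PNKref. The second is certified through the dual program announced at the start of this section.

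\emph{Lower bound.} By Proposition~\ref{thm:equivalence-PNK-principal} it suffices to show that every feasible $(f,d)$ of \PNKref has $d\ge Q_{m,k}(k/n)/k$. Write $g_{n,k}(u)=n\,\PP[\mathrm{Bin}(n-1,u)\le k-1]$. This gives $0\le g_{n,k}\le n$, and $\int_0^1 g_{n,k}=k$ since the prophet with $f\equiv 1$ collects exactly $k$. Because $f$ is non-increasing, the bathtub (rearrangement) principle says that $\int_0^1 g_{n,k} f$ is maximized, over densities bounded by $n$ with total mass $k$, by putting all mass on $[0,k/n]$. Hence
\[
1=\int_0^1 g_{n,k}(u)f(u)\,\mathrm{d}u\le n\int_0^{k/n} f(u)\,\mathrm{d}u .
\]
Plugging $q=k/n$ into \eqref{PNK1} then gives
\[
d\ge \frac{Q_{m,k}(k/n)}{k/n}\int_0^{k/n}f\ge \frac{Q_{m,k}(k/n)}{k}.
\]
The same computation shows that the threshold at quantile $k/n$ achieves this ratio for every $F$.

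\emph{Upper bound via duality.} I would first write the dual of \PNKref. Its variables are a probability measure $\mu$ on $q\in[0,1]$ (the multipliers of \eqref{PNK1}) and a scalar $\lambda$ (the multiplier of \eqref{PNK2}). Monotonicity of $f$ is handled by testing against step functions $\mathbb{1}_{[0,t]}$, since these generate the cone of non-increasing non-negative functions. This yields the dual constraint
\[
\int \frac{Q_{m,k}(q)}{q}\min\{q,t\}\,\mathrm{d}\mu(q)\ \ge\ \lambda\int_0^t g_{n,k}(u)\,\mathrm{d}u\qquad\text{for all } t\in[0,1],
\]
with objective $\lambda$. Weak duality then follows by integrating $f$ against this family via the layer-cake representation $f=\int \mathbb{1}_{[0,t]}\,\mathrm{d}(-f)(t)$. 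The candidate dual solution is $\mu=\delta_{k/n}$ with $\lambda=Q_{m,k}(k/n)/k$. With this choice the constraint reads $Q_{m,k}(k/n)\min\{1,tn/k\}\ge \lambda\int_0^t g_{n,k}$, which holds because $\int_0^t g_{n,k}\le \min\{nt,k\}$. So the dual side is easy, and it only re-proves the lower bound.

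\emph{Primal construction (main obstacle).} The real work is exhibiting feasible $f$ with $\sup_q \frac{Q_{m,k}(q)}{q}\int_0^q f$ equal to (or arbitrarily close to) $Q_{m,k}(k/n)/k$. Complementary slackness says two things. First, \eqref{PNK1} should be tight at $q=k/n$, and the integrated dual constraint should be tight wherever $f$ decreases. Second, $\int_0^t g_{n,k}=nt$ is essentially possible only as $t\to 0$. So I would take $f$ as a limit of a large spike of mass $a$ on $[0,\eta]$, $\eta\to 0$, plus a non-increasing profile $h$ on $[\eta,1]$. The spike is chosen so that the prophet collects $n$ per unit of spike mass, and the remaining prophet value is normalized to $1$. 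The profile is engineered so that $\frac{Q_{m,k}(q)}{q}\bigl(a+\int_\eta^q h\bigr)\le d^\ast$ for all $q$, with equality at $q=k/n$. My first attempt is to define $h$ on $[0,k/n]$ by making \eqref{PNK1} tight, i.e.\ $a+\int_0^q h=d^\ast q/Q_{m,k}(q)$; this is increasing in $q$ because $Q_{m,k}$ is concave with $Q_{m,k}(0)=0$ (Proposition~\ref{prop:expected-value-algorithm}\ref{characterization-sol-c}). Beyond $k/n$ I would continue $h$ so that the prophet's marginal value matches. The delicate step, which I expect to be the hardest (it is the analogue of Lemma~\ref{prop:supremum-q-d}), is then to verify three things: that $h\ge 0$ and is non-increasing, which needs the sign of $(q/Q_{m,k})''$, computable from Proposition~\ref{prop:expected-value-algorithm}\ref{characterization-sol-b}--\ref{characterization-sol-c}; that the supremum over $q>k/n$ does not exceed $d^\ast$; and that \eqref{PNK2} holds exactly. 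These are one-variable inequalities between binomial tail polynomials, which I would attack by differentiating in $q$ and reducing to sign checks of single binomial terms.
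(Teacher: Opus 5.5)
Your lower bound is correct, and it is the paper's weak-duality step in different language. The paper's dual certificate is $\alpha=\delta_{k/n}$ with $\eta(u)=v\bigl(\min\{nu,k\}-\int_0^u g_{n,k}\bigr)$, so integrating $\eta$ against $-\mathrm{d}f$ is exactly your bathtub estimate $\int_0^1 g_{n,k}f\le n\int_0^{k/n}f$.

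The gap is in the other half, which carries the content of the theorem, and your proposed extremal instance does not work. Making \eqref{PNK1} tight on all of $(0,k/n]$ forces $a=d^\ast/Q_{m,k}'(0)=d^\ast/m$ and $h=d^\ast\,\frac{\mathrm{d}}{\mathrm{d}q}\bigl(q/Q_{m,k}(q)\bigr)$. But $q\mapsto q/Q_{m,k}(q)$ need not be concave. For $m=k+1$ it equals $1/(k+1-q^k)$, which is strictly convex, so $h$ is strictly increasing. Concretely, for $k=1$, $m=n=2$ you get $d^\ast=3/4$ and $h(q)=\tfrac34(2-q)^{-2}$, which rises from $3/16$ to $1/3$ on $[0,1/2]$, so \eqref{PNK4} fails.

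Your own complementary-slackness remark already rules out any profile that decreases inside $(0,1)$. For $k<n$ the slack $\min\{nt,k\}-\int_0^t g_{n,k}$ is strictly positive for every $t\in(0,1)$. So unless $f$ is constant on $(0,1)$, the bathtub inequality is strict, and \eqref{PNK1} at $q=k/n$ then forces $d>Q_{m,k}(k/n)/k$. The only candidates are therefore $f=A\delta_0+B\mathds{1}_{(0,1]}$ with $nA+kB=1$. This automatically gives $\phi(k/n)=Q_{m,k}(k/n)/k$ for $\phi(q)=\frac{Q_{m,k}(q)}{q}(A+Bq)$. The condition $\phi'(k/n)=0$ then pins down the paper's $A=kQ'_{m,k}(k/n)/(n^2Q_{m,k}(k/n))$, and $B\ge 0$ follows from concavity of $Q_{m,k}$. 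So \eqref{PNK1} is tight only at the single point $q=k/n$, not on an interval.

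What your proposal then lacks is the actual content of Lemma~\ref{prop:supremum-q-d}: that this $\phi$ has its global maximum at $k/n$. This is not a routine sign check, because $\phi=A\,Q_{m,k}(q)/q+B\,Q_{m,k}(q)$ is in general not concave. For $k=1$, $m=3$ one has $Q_{m,k}(q)/q=3-3q+q^2$ and $Q_{m,k}''(1)=0$, so $\phi''(1)=2A>0$.

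The paper needs a genuine quasi-concavity argument:
\begin{itemize}
\item It writes $\phi'(q)=B\,(Q-qQ')\,(R(q)-R(k/n))/q^2$ with $R=q^2Q'/(Q-qQ')$.
\item It expresses $Q'$ and $Q-qQ'$ as $m(m-1)\int_q^1 w$ and $m(m-1)\int_0^q tw$ for the Beta kernel $w$.
\item It proves $R$ is strictly decreasing by showing $2IJ-qwJ-q^2wI<0$. This uses an analysis of the interior critical points of that expression, the location of the mode of $w$, and an explicit series for $I$.
\item The cases $m=k$ and $m=k+1$ are treated separately.
\end{itemize}
Without this step, or a substitute for it, your argument proves only $\gamma_{m,n,k}\ge Q_{m,k}(k/n)/k$, not equality.
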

Observe that a key aspect of our result is that the optimal solution uses $q=k/n$ as a quantile, i.e., for a distribution $F$, the best single-threshold algorithm is $F^{-1}(1-k/n)$.
This quantile is independent of $m$, and when $k=1$ and $m=n$, it matches the well-known result that the best single-threshold is attained by $q=1/n$.
The rest of this section is devoted to proving Theorem~\ref{thm:optimal-solution-PNK}.
We start by introducing the dual problem:
\begin{align}
\sup \quad &v \tag*{\normalfont{\mbox{[D$_{m,n,k}$]}}}\label{DNK}\\
\text{s.t.} \quad &\int_{0}^{1} \alpha(q)\,dq = 1, \label{DNK1} \\
& \quad v g_{n,k}(u) + \frac{d\eta(u)}{du} \le \int_{u}^{1} \alpha(q) \frac{Q_{m,k}(q)}{q} dq \quad \text{for all } u \in [0,1] \text{ a.e.}, \label{DNK2} \\
& \quad \alpha: [0,1] \to \mathbb{R}, \; \eta: [0,1] \to \mathbb{R}_+\;\text{absolutely continuous},\; \eta(0) =\eta(1)= 0.\label{DNK3}
\end{align}
The following proposition proves that weak duality holds between \PNKref and \DNKref.
\begin{proposition}\label{prop:weak-duality}
  The optimal value of \PNKref is at least the optimal value of \DNKref.
\end{proposition}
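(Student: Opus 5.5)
Take any feasible primal solution $(f,d)$ of \PNKref and any feasible dual solution $(v,\alpha,\eta)$ of \DNKref; we show $d\ge v$, and taking the infimum over primal and supremum over dual solutions gives the claim. The natural approach is the standard Lagrangian chain: integrate constraint \eqref{PNK1} against $\alpha$, swap the order of integration, use \eqref{DNK2}, and finally use the normalization \eqref{PNK2}. One caveat is that $\alpha$ is not sign-constrained in \eqref{DNK3}, so \eqref{PNK1} cannot be multiplied by $\alpha$ pointwise. Instead I would first use $\int_0^1\alpha=1$ from \eqref{DNK1} to write $d=\int_0^1\alpha(q)\,d\,dq$. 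I expect the proof to need $\alpha\ge 0$, and I would check whether it is implicit; otherwise the inequality $\int\alpha(q)d\,dq\ge \int \alpha(q)\frac{Q_{m,k}(q)}{q}\int_0^q f$ fails in general. With $\alpha\ge0$ we obtain
\begin{equation*}
d=\int_0^1 \alpha(q)\, d\, dq \;\ge\; \int_0^1 \alpha(q)\frac{Q_{m,k}(q)}{q}\int_0^q f(u)\,du\,dq=\int_0^1 f(u)\int_u^1 \alpha(q)\frac{Q_{m,k}(q)}{q}\,dq\,du ,
\end{equation*}
by Fubini/Tonelli. Here $Q_{m,k}(q)/q\le Q_{m,k}'(0)=m$ by concavity (Proposition~\ref{prop:expected-value-algorithm}\ref{characterization-sol-c}), so everything is integrable when $f$ is integrable, and $f$ is integrable since \eqref{PNK2} holds with $g_{n,k}$ bounded below near the relevant region.

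Next, since $f\ge 0$, apply \eqref{DNK2} pointwise almost everywhere:
\begin{equation*}
\int_0^1 f(u)\int_u^1 \alpha(q)\frac{Q_{m,k}(q)}{q}\,dq\,du \;\ge\; v\int_0^1 f(u)g_{n,k}(u)\,du+\int_0^1 f(u)\,\eta'(u)\,du = v+\int_0^1 f(u)\,\eta'(u)\,du,
\end{equation*}
using \eqref{PNK2}. It remains to show $\int_0^1 f\,\eta'\ge 0$, and this is where the monotonicity of $f$ and the conditions $\eta\ge0$, $\eta(0)=\eta(1)=0$ enter. This is the main technical step. Formally, integration by parts gives $[f\eta]_0^1-\int_0^1 \eta\,df=-\int_0^1\eta\,df\ge 0$, because $df\le 0$ as a measure and $\eta\ge0$.

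To make this rigorous without assuming regularity of $f$, I would use the layer-cake representation. Write $f(u)=\int_0^\infty \mathds{1}\{f(u)>t\}\,dt$. Since $f$ is non-increasing, $\{u: f(u)>t\}$ is an interval $[0,a_t)$ up to endpoints. By Tonelli applied to the positive and negative parts of $\eta'$ (both integrable by absolute continuity), followed by Fubini,
\begin{equation*}
\int_0^1 f\eta'\,du=\int_0^\infty\int_0^{a_t}\eta'(u)\,du\,dt=\int_0^\infty \eta(a_t)\,dt\ge 0,
\end{equation*}
using $\eta(0)=0$ and $\eta\ge0$. A possible blow-up of $f$ at $0$ is harmless, because Fubini only needs $\int f|\eta'|<\infty$. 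Should that fail, I would truncate with $\min\{f,M\}$ and pass to the limit by monotone convergence on each piece.

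Chaining these inequalities gives $d\ge v$. The points needing care are the sign of $\alpha$ and the integration-by-parts step; the latter is the main obstacle.
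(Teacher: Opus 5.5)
Your proposal is correct and follows the paper's own argument: integrate \eqref{PNK1} against $\alpha$, swap the integrals by Tonelli, apply \eqref{DNK2} weighted by $f\ge 0$, use \eqref{PNK2}, and show $\int_0^1 f\eta'\,du\ge 0$ from $\eta\ge 0$, $\eta(0)=\eta(1)=0$ and $f$ non-increasing. Your caveat is well placed: the paper's last inequality also uses $\alpha\ge 0$ without saying so, and the condition cannot be dropped, since for $k=1$, $m=n=2$ a signed continuous $\alpha$ makes the dual unbounded; your layer-cake/truncation argument is also a rigorous version of the paper's formal identity $\int_0^1 f\eta'\,du=-\int_0^1\eta f'\,du$, which assumes $f$ is differentiable.
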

\begin{proof}
  Let $(f, d)$ be a feasible solution to the problem \PNKref and $(\alpha, \eta, v)$ be a feasible solution to the problem \DNKref. 
  Then, integration by parts implies that
  \begin{align}
    \int_0^1 f(u) \frac{d\eta}{du}(u) du &= - \int_{0}^1 \eta(u) \frac{df}{du}(u) du \geq 0,
    \label{eq:integration-by-parts}
  \end{align}
  where the equality follows from the border conditions in \eqref{DNK3} and the inequality holds since $\eta \geq 0$ and $f$ is non-increasing, by \eqref{DNK3} and \eqref{PNK4}, respectively. Then, multiplying by $f(u)$ and integrating over $[0,1]$ we get
  \begin{align*}
  v &\leq v + \int_0^1 f(u) \frac{d\eta}{du}(u) du\\
  &= v \int_0^1 g_{n,k}(u) f(u) du + \int_0^1 f(u) \frac{d\eta}{du}(u) du \\
  &= \int_0^1 f(u) \left( v g_{n,k}(u) + \frac{d\eta}{du}(u) \right) du\\
  &\leq \int_0^1 f(u) \int_{u}^1 \alpha(q) \frac{Q_{m,k}(q)}{q} dq\ du= \int_0^1 \alpha(q) \frac{Q_{m,k}(q)}{q} \int_{u}^1 f(u) du\ dq\leq \int_0^1 \alpha(q) dq \cdot d = d,
  \end{align*}
  where the first inequality follows from \eqref{eq:integration-by-parts}, the first equality follows from \eqref{PNK2}, the second inequality from \eqref{DNK2}, the last inequality from \eqref{PNK1}, and the last equality from \eqref{DNK1}. 
  We conclude that $v \leq d$.
\end{proof}
Now we show that strong duality holds between \PNKref and \DNKref. 
To do so, we construct explicit solutions to the primal-dual pair that share the same objective value; by Proposition \ref{prop:weak-duality}, this implies strong duality after establishing feasibility for each corresponding program. 
More specifically, for the primal \ref{PNK}, consider the pair $(f,d)$ given by 
\begin{align*}
    u\mapsto f(u) &= \delta(u) A(n,m, k) + \mathds{1}_{(0,1]}(u) B(n,m, k),\text{ and }d = \frac{Q_{m,k}(k/n)}{k},
  \end{align*}
  with $A(n,m,k) =  k Q_{m,k}'(k/n)/{(n^{2} Q_{m,k}(k/n))}$ and
    $B(n,m,k) = {1}/{k}-Q_{m,k}'(k/n)/{(n Q_{m,k}(k/n))}.$
    For the dual \ref{DNK}, consider $(\alpha,v,\eta)$ given by
    \begin{align*}
    u\mapsto \alpha(u) &= \delta_{{k}/{n}}(u),\; v = \frac{Q_{m,k}(k/n)}{k} ,\text{ and }\\
    \eta(u) &= \begin{cases}
      \frac{n}{k} Q_{m,k}(k/n)u - v\int_0^u g_{n,k}(s) ds & \text{if } u \in [0, {k}/{n}],\\
      Q_{m,k}(k/n) - v\int_0^u g_{n,k}(s) ds & \text{if } u \in ({k}/{n}, 1],
    \end{cases}
  \end{align*}
  where $\delta_{k/n}$ is a Dirac delta at $k/n$.
  Observe that, by construction, both solutions $(f,d)$ and $(\alpha,v,\eta)$ have the same objective value in the primal and dual, respectively.
  The most challenging part is showing primal feasibility for $(f,d)$; this is summarized in the following technical lemma whose full proof is deferred to Section \ref{subsec:proof-supremum-q-d}.
\begin{lemma}\label{prop:supremum-q-d}
  For every $n \in \NN$, $k \in [n]$ and $m \in \NN$ such that $m \geq n$, we have
  \[
    \frac{Q_{m,k}(k/n)}{k} = \sup_{q \in [0,1]} \left\{ \frac{Q_{m,k}(q)}{q} \int_{0}^q f(u) \, du \right\},
  \]
  where $f(u) = \delta(u) A(n, m, k) + \mathds{1}_{(0,1]}(u) B(n, m, k)$.
\end{lemma}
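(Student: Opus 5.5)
The plan is to reduce the supremum to a one-variable calculus problem and prove that $q=k/n$ is the global maximizer. Write $c=k/n$, $Q=Q_{m,k}$, $A=A(n,m,k)$ and $B=B(n,m,k)$. Since $f$ is a point mass $A$ at $0$ plus the constant $B$ on $(0,1]$, we have $\int_0^q f(u)\,du = A+Bq$ for $q\in(0,1]$. The quantity to maximize is therefore
\[
h(q)=\frac{Q(q)}{q}(A+Bq)=A\frac{Q(q)}{q}+B\,Q(q).
\]

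\textbf{Step 1 (value and stationarity at $c$).} Substituting the definitions gives $A n/k + B = 1/k$. Hence $h(c)=Q(c)\,(A/c+B)=Q(c)/k$, which is the claimed value, so the supremum is at least $Q(c)/k$.

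Next compute
\[
h'(q)=\frac{N(q)}{q^2},\qquad N(q)=Q'(q)(Aq+Bq^2)-A\,Q(q).
\]
A direct substitution at $q=c$ gives $N(c)=0$: the terms reduce to $cQ'^2/(kQ)-Q'^2/(nQ)=0$. We also record the signs of the coefficients. Clearly $A\ge 0$. For $B$, note $B=(Q(c)-cQ'(c))/(kQ(c))\ge 0$, because $Q$ is concave with $Q(0)=0$ (Proposition~\ref{prop:expected-value-algorithm}\ref{characterization-sol-c}). The case $c=1$ is trivial, so assume $c<1$.

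\textbf{Step 2 (single crossing of $N'$).} Differentiating,
\[
N'(q)=q\bigl[\,Q''(q)(A+Bq)+2B\,Q'(q)\,\bigr].
\]
So for $q>0$ we have $N'(q)\ge 0$ if and only if
\[
\Phi(q):=\frac{-Q''(q)}{Q'(q)}\cdot\frac{A+Bq}{2B}\le 1 .
\]
I will show that $\Phi$ is nondecreasing. Use the formulas for $Q'$ and $Q''$ from Proposition~\ref{prop:expected-value-algorithm} and divide numerator and denominator by $(1-q)^{m-1}$, setting $r=q/(1-q)$. This gives
\[
\frac{-Q''(q)}{Q'(q)} = (m-1)\binom{m-2}{k-1}\,\frac{r^{k-1}(1+r)}{\sum_{i=0}^{k-1}\binom{m-1}{i}r^i}.
\]
The factor $r^{k-1}/\sum_{i\le k-1}\binom{m-1}{i}r^i$ is the reciprocal of a sum of terms $r^{i-k+1}$ with nonpositive exponents, so it is nondecreasing in $r$. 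The factor $1+r$ is increasing in $r$, and $r$ is increasing in $q$. Finally, $(A+Bq)/(2B)$ is nondecreasing since $A,B\ge 0$; if $B=0$ then $N'\le 0$ directly. Hence $\Phi$ is nondecreasing, so $N'$ is nonnegative on an initial interval and nonpositive afterwards.

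\textbf{Step 3 (conclude).} $N$ is unimodal (first nondecreasing, then nonincreasing) with $N(0)=0$ and $N(c)=0$.
\begin{itemize}
\item On $[0,c]$: if $N$ were negative somewhere in $(0,c)$, unimodality together with $N(0)=0$ would force it to be nonincreasing from that point on, contradicting $N(c)=0$ (unless $N\equiv 0$). So $N\ge 0$ on $[0,c]$.
\item On $[c,1]$: $N$ must already be in its nonincreasing phase at $c$, so $N\le 0$ on $[c,1]$.
\end{itemize}
Thus $h$ is nondecreasing on $(0,c]$ and nonincreasing on $[c,1]$. This gives $\sup_q h(q)=h(c)=Q(c)/k$, and the endpoints $q\to 0$ and $q=1$ are covered by the same monotonicity.

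The main obstacle is Step 2: verifying that $-Q''/Q'$ is monotone with the correct normalization, and handling degenerate cases such as $k=1$ or $k=m$ with care. Where exactly the hypothesis $m\ge n$ enters should become apparent there, or in the sign of $B$. If the monotonicity argument turns out too delicate, my fallback is a direct check that $N(q)$ changes sign only once via Descartes' rule applied to $N$ as a polynomial in $r$.
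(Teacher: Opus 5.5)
Your argument is correct, and it takes a genuinely different and noticeably shorter route than the paper. Both proofs start from $\phi'(q)=N(q)/q^2$ (your $h$ is the paper's $\phi$) and from the stationarity $N(k/n)=0$. From there they diverge.

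\textbf{The paper's route.} It writes the sign of $\phi'(q)$ as the sign of $R(q)-R(k/n)$, with $R(q)=q^2Q'(q)/(Q(q)-qQ'(q))$, and then proves that $R$ is strictly decreasing. That is the heavy step. Using $Q'=m(m-1)\int_q^1 w$ and $Q-qQ'=m(m-1)\int_0^q t\,w(t)\,dt$, where $w$ is a beta-type density, it reduces $R'<0$ to the negativity of $2IJ-qwJ-q^2wI$. That negativity comes from a critical-point argument, which needs the explicit finite-series expansion of $I$, a coefficient bound, a geometric-series estimate, and the location of the mode of $w$. The cases $m=k$ and $m=k+1$ are treated separately.

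\textbf{Your route.} You differentiate $N$ once more and reduce everything to the log-concavity of $Q'$, i.e.\ monotonicity of $-Q''/Q'$. The substitution $r=q/(1-q)$ makes this immediate, because $Q''$ is a single Bernstein term while $Q'$ is a partial sum. Unimodality of $N$ together with $N(0)=N(k/n)=0$ then fixes the sign of $\phi'$ on each side of $k/n$.

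\textbf{Comparison.} Your argument is uniform in $m\ge k$; the only degenerate case, $B=0$, is exactly $m=k$. It also gives the paper's key fact as a by-product: $N(q)=0$ if and only if $R(q)=A/B$, and your argument works for every $A\ge 0$, $B>0$. So $R$ attains each level at most once on $(0,1)$, and by continuity $R$ is strictly monotone. The paper's formulation has the merit of isolating a property of $Q_{m,k}$ alone, independent of $A$ and $B$. Yours is more elementary.

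\textbf{One line to add.} In the second bullet of Step 3, justify why $k/n$ cannot still lie in the nondecreasing phase of $N$. If it did, $N(0)=N(k/n)=0$ would force $N\equiv 0$ on $[0,k/n]$, i.e.\ $\Phi\equiv 1$ there. That is impossible, since for $m\ge k+1$ and $B>0$ the factor $1+r$ makes $\Phi$ strictly increasing. Alternatively, $N$ is a polynomial, so it would vanish identically and $\phi$ would be constant, which is harmless.

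\textbf{On the hypothesis $m\ge n$.} It is not used anywhere, neither in your argument nor in the paper's; both work for all $m\ge k$, so there is nothing to look for there.
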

\begin{proof}[Proof of Theorem \ref{thm:optimal-solution-PNK}]
  Let us first prove that $(f, d)$ is a feasible solution for \ref{PNK}.
  By Lemma \ref{prop:supremum-q-d}, we have that $(f,d)$ satisfies constraint \eqref{PNK1}.
  We now verify the non-negativity of $f$. 
  To do so, we have to ensure that both $A(n,m,k)$ and $B(n,m,k)$ are non-negative. 
  The numerator of $A(n, m, k)$ is non-negative since $Q'_{m,k}\ge 0$ by Proposition \ref{prop:expected-value-algorithm}\ref{characterization-sol-b} and for the denominator we have $n^2 Q_{m,k}(k/n) = n^2\left(k - \EE[(k - Y)_+]\right)\geq 0,$
  where the first equality holds from Proposition \ref{prop:expected-value-algorithm}\ref{characterization-sol-a} with $Y\sim \normalfont\text{Binomial}(m, k/n)$, and the inequality is just consequence of $(k-Y)_+\le k$. 
  Therefore, $A(n, m, k) \geq 0$. 
  To prove that $B(n, m, k) \geq 0$, we use the fact that $Q_{m,k}$ is concave, proved in the Proposition \ref{prop:expected-value-algorithm}\ref{characterization-sol-c}; since $Q_{m,k}(0) = 0$ the concavity of $Q_{m,k}$ implies $q Q'_{m,k}(q) \leq Q_{m,k}(q)$ for all $q \in [0,1]$, and therefore, taking $q = k/n$ we have
  \begin{align*}
  B(n, m, k) &= \frac{1}{k} - \frac{Q_{m,k}'(k/n)}{n Q_{m,k}(k/n)}\geq \frac{1}{k} - \frac{1}{n (k / n)} = 0.
  \end{align*}
  Then, $f$ is non-negative, and by construction $f$ is non-increasing, i.e., \eqref{PNK4} holds. 
  
  Now we verify that $(f,d)$ satisfies \eqref{PNK2}.
  Observe that by construction, we have
  \begin{align*}
  \int_0^1 f(u) g_{n,k}(u) du &= \int_0^1 \delta(u) A(n, m, k) g_{n,k}(u) du + \int_0^1  B(n, m, k) g_{n,k}(u) du\\
  &= A(n, m, k) g_{n,k}(0) + B(n, m, k) \int_0^1 g_{n,k}(u) du\\
  &=A(n,m,k)\cdot n+B(n,m,k)\cdot k=1,
  \end{align*}
  where the third equality holds by simply evaluating $g_{n,k}(0)$ and the fact that $\int_0^1g_{n,k}(u)du=k$, and the last equality follows by directly replacing with the values of $A(n,m,k)$ and $B(n,m,k)$.
  Therefore, constraint \eqref{PNK2} holds, which concludes the feasibility proof for $(f,d)$.
  
  Now verify that $(\alpha,v,\eta)$ is feasible for \ref{DNK}.
  It is clear that both $\alpha$ and $v$ are non-negative.
  By simply evaluating, we also have that $\eta(0)=\eta(1)=0$, but we need to show that $\eta$ is non-negative. 
  The derivative of $\eta$ is $({n}/{k}) Q_{m,k}({k}/{n}) - vg_{n,k}(u)$ if  $u \in [0, {k}/{n}]$ and $- vg_{n,k}(u)$ if $u \in ({k}/{n}, 1]$.
  In particular, since $\eta(1)=0$, we have that $\eta$ is non-negative interval $({k}/{n}, 1]$.
  On the other hand, recall that $\eta(0)=0$ and to prove that $\eta$ is increasing over $u \in [0, {k}/{n}]$ we use the fact that for all $u \in [0,1]$ it holds that $g_{n,k}(u) \leq n\big( (1-u) + u \big)^{n-1}=n$.
  Now, we have
  \begin{align*}
    \frac{d\eta(u)}{du} &= \frac{n}{k}Q_{m,k}({k}/{n}) - vg_{n,k}(u)\\
    &= Q_{m,k}(k/n) \left( \frac{n}{k} - \frac{v}{Q_{m,k}(k/n)}g_{n,k}(u) \right)\\
    &\geq Q_{m,k}(k/n) \left( \frac{n}{k} - \frac{v}{Q_{m,k}(k/n)}n \right)= Q_{m,k}(k/n) \left( \frac{n}{k} - \frac{n}{k}  \right) = 0,
  \end{align*}
  where the last equality follows since $v = Q_{m,k}({k}/{n})/k$. 
  We conclude that $\eta$ is non-negative, that is, \eqref{DNK3} holds.
  Note that \eqref{DNK1} is also directly satisfied since $\alpha$ is a Dirac delta. 
  Finally, to study \eqref{DNK2}, observe that
  \begin{align*}
    v g_{n,k}(u) + \frac{d\eta(u)}{du} &= v g_{n,k}(u) + 
      \Big(\frac{n}{k} Q_{m,k}\Big(\frac{k}{n}\Big) - vg_{n,k}(u)\Big)\mathds{1}_{[0, {k}/{n}]}(u)- vg_{n,k}(u)\mathds{1}_{({k}/{n}, 1]}(u)\\
    &= 
      \frac{n}{k} Q_{m,k}\Big(\frac{k}{n}\Big)\mathds{1}_{[0, {k}/{n}]}(u)= \int_u^1 \alpha(q) \frac{Q_{m,k}(q)}{q}  dq,
  \end{align*}
  therefore, \eqref{DNK2} is satisfied. We conclude that $(\alpha, v, \eta)$ is a feasible solution to the problem \DNKref.
  In summary, we have a primal-dual feasible pair with equal objective values, and as weak-duality holds by Proposition \ref{prop:weak-duality}, we conclude that strong duality holds with optimal value ${Q_{m,k}(k/n)}/{k}$.
\end{proof}

\subsection{Proof of Lemma~\ref{prop:supremum-q-d}}
\label{subsec:proof-supremum-q-d}
Consider the function $\phi:[0,1] \to \mathbb{R}_+$ as $\phi(q) = ({Q_{m,k}(q)}/{q})\int_0^q f(u)\,du$ for $q\in (0,1]$, and in $q=0$ we define $\phi(0) = \lim_{q \downarrow 0} \phi(q) = 0$. 
To prove Lemma~\ref{prop:supremum-q-d}, we will show that the function $\phi$ is quasiconcave.
For notation simplicity, in what follows we omit the subindexes as they remain fixed during this section, i.e., we let $A=A(n,m,k)$, $B=N(n,m,k)$, and $Q=Q_{m,k}$.
Since $\int_0^q f(u)\,du = A + q B$, when $q \in (0,1]$ we have
$\phi(q) = ({Q(q)}/{q})(A+q B),$ and differentiating with respect to $q$ gives
\begin{align*}
  \phi'(q) = \frac{Q'(q)}{q}\big(A+qB\big) + \frac{Q(q)}{q}B - \frac{Q(q)}{q^2}\big(A+qB\big).
\end{align*}
Since $B = ({1-nA})/{k}$, in the previous expression we get
\begin{align*}
  \phi'(q) &= \frac{Q'(q)}{q}\left(A\left(1 - q \frac{n}{k}\right) + \frac{q}{k}\right) + \frac{Q(q)}{q}\frac{1-nA}{k} - \frac{Q(q)}{q^2}\left(A\left(1 - q \frac{n}{k}\right) + \frac{q}{k}\right)\\
  &= \frac{Q'(q)}{q}\left(A\left(1 - q \frac{n}{k}\right) + \frac{q}{k}\right) - \frac{Q(q) A}{q^2},
\end{align*}
which evaluated at $q = k/n$ gives
$\phi'(k/n) = {Q'(k/n)}/{k} - Q(k/n)  ({n}/{k})^2 A=0,$
where the last equality follows from the definition of $A$.
Therefore, $q=k/n$ is a critical point of the function $\phi$. 
To prove that $k/n$ is the unique global maximum of $\phi$, we first split the proof into the cases of $m = k$, $m = k + 1$ and then for $m \geq k + 2$. In the two first cases, we got that $\phi$ is simpler, and for the other case we reformulate $\phi'(q)$ in terms of an auxiliary function $R(q)$, and then reduce the problem to studying the function $R$ instead.
Note that for $m = k$, we have that $Q(q) = kq$ since $Y \leq k$. Then $B = 1/k -Q'(k/n)/(nQ(k/n)) = 0$ and so $\phi$ is constant. For $m = k+1$, we have
\[
    Q(q) = \EE[\min\{k,Y\}] = \EE[Y - \mathds{1}_{\{k+1\}}(Y)] = (k+1)q - q^{k+1}, \quad \text{as }Y\sim \text{Binomial}(k+1,q).
\]
Then, $\phi(q) = (k+1-q^k)(A+Bq)$ and so $\phi'(q) = B(k+1-q^k) - kq^{k-1}(A+Bq)$, which is decreasing. By the previous analysis, $\hat{q}=k/n$ is a critical point and then it is the maximum. For the general case, we prove it through the next propositions assuming that $m \geq k+2$.
\begin{proposition}\label{lem:reformulation}
For every $q\in [0,1]$, we have $
\phi'(q) = B(Q(q) - Q'(q)q)( R(q) - R(k/n))/q^2,$
where $R(q) = {q^2 Q'(q)}/{(Q(q) - qQ'(q))}.$
In particular, the sign of $\phi'(q)$ is equal to the sign of $R(q) - R(k/n)$.
\end{proposition}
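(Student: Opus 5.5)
We prove the identity by direct algebra, starting from the simplified expression for the derivative already obtained above:
\[
\phi'(q)=\frac{Q'(q)}{q}\Big(A\big(1-q\tfrac{n}{k}\big)+\tfrac{q}{k}\Big)-\frac{Q(q)A}{q^2}.
\]
Multiplying by $q^2$ and collecting the terms in $A$ gives
\[
q^2\phi'(q)=\frac{q^2Q'(q)}{k}-A\Big(Q(q)-qQ'(q)\Big)-A\,\frac{n}{k}\,q^2Q'(q).
\]

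\textbf{Step 1: rewrite $1/k$.} We use $1/k=B+nA/k$, which is just the relation $B=(1-nA)/k$. Substituting into the first term cancels the last term exactly, leaving
\[
q^2\phi'(q)=B\,q^2Q'(q)-A\big(Q(q)-qQ'(q)\big).
\]

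\textbf{Step 2: express $A/B$ through $R$.} Here we use the defining formulas. With $q_0=k/n$,
\[
A=\frac{kQ'(q_0)}{n^2Q(q_0)},\qquad B=\frac{Q(q_0)-q_0Q'(q_0)}{kQ(q_0)}.
\]
Taking the ratio,
\[
\frac{A}{B}=\frac{k^2Q'(q_0)}{n^2\big(Q(q_0)-q_0Q'(q_0)\big)}=\frac{q_0^2Q'(q_0)}{Q(q_0)-q_0Q'(q_0)}=R(q_0).
\]
Hence $A=B\,R(k/n)$, provided $B\neq0$. Substituting, for $q\in(0,1]$ we get
\[
q^2\phi'(q)=B\big(Q(q)-qQ'(q)\big)\big(R(q)-R(k/n)\big),
\]
which is the claimed identity.

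\textbf{Step 3: well-definedness and signs.} This is where the hypothesis $m\ge k+2$ enters, and it is the main point to check. We show $Q(q)-qQ'(q)>0$ on $(0,1]$. Since $\frac{d}{dq}\big(Q-qQ'\big)=-qQ''\ge0$, and by Proposition \ref{prop:expected-value-algorithm}\ref{characterization-sol-c} $Q''<0$ on $(0,1)$ when $m\ge k+1$, the function $Q-qQ'$ is strictly increasing from its value $0$ at $q=0$. Therefore it is positive on $(0,1]$, so $R$ is well defined there.

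It follows that $B>0$, because $B=(Q(k/n)-(k/n)Q'(k/n))/(kQ(k/n))$ and both numerator and denominator are positive. Thus the factor $B(Q(q)-qQ'(q))/q^2$ is positive on $(0,1]$, and the sign of $\phi'(q)$ equals the sign of $R(q)-R(k/n)$.

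At $q=0$ the statement is interpreted by continuity or limit, or read only for $q\in(0,1]$.
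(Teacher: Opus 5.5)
Your proof is correct and follows essentially the same route as the paper. You reduce to $q^2\phi'(q)=B\,q^2Q'(q)-A\bigl(Q(q)-qQ'(q)\bigr)$ and substitute $A=B\,R(k/n)$. You verify that substitution directly from the formulas for $A$ and $B$, whereas the paper solves $\phi'(k/n)=0$ for $A$; the two are equivalent. Your Step 3 is slightly more careful than the paper's: strict concavity for $m\ge k+1$ gives $Q(q)-qQ'(q)>0$ on $(0,1]$, which is what makes $R$ well defined, gives $B>0$, and makes the sign claim valid, while the paper only records $Q(q)-qQ'(q)\ge 0$.
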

\begin{proof}
We compute $\phi'(q)$ directly to get
\begin{align*}
\phi'(q) 
&= \frac{A(Q'(q)q - Q(q)) + BQ'(q)q^2}{q^2}.
\end{align*}
Since $\phi'(\hat{q}) = 0$ for $\hat{q} = k/n$, we get
$A = -{B\hat{q}^2Q'(\hat{q})}/{(Q'(\hat{q})\hat{q} - Q(\hat{q}))}.$
By substituting into $\phi'(q)$,
\begin{align*}
\phi'(q) &= \frac{1}{q^2}\left(-\frac{B\hat{q}^2Q'(\hat{q})}{Q'(\hat{q})\hat{q} - Q(\hat{q})}(Q'(q)q - Q(q)) + BQ'(q)q^2\right)\\
&= \frac{B(Q(q) - Q'(q)q)}{q^2} \left( \frac{q^2Q'(q)}{Q(q) - Q'(q)q} - \frac{\hat{q}^2Q'(\hat{q})}{Q(\hat{q}) - Q'(\hat{q})\hat{q}} \right)= \frac{B(Q(q)-Q'(q)q)}{q^2} (R(q) - R(\hat{q})),
\end{align*}
which proves the first part. 
Since $B > 0$ and $Q(q) - Q'(q)q\ge 0$ for every $q\in [0,1]$ (by $Q$ concave), the sign of $\phi'(q)$ is determined by $R(q) - R(k/n)$.
\end{proof}

\begin{proposition}\label{prop:key-reduction}
If $R$ is strictly decreasing on $(0,1)$, then the function $\phi$ achieves its unique global maximum at the point $q= k/n$.
\end{proposition}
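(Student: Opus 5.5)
The plan is to read off the sign pattern of $\phi'$ from Proposition~\ref{lem:reformulation}. On $(0,1)$ we have
\[
\phi'(q) = \frac{B\,(Q(q)-qQ'(q))}{q^2}\,\bigl(R(q)-R(k/n)\bigr).
\]
If $R$ is strictly decreasing on $(0,1)$, then $R(q)-R(k/n)>0$ for $q<k/n$ and $R(q)-R(k/n)<0$ for $q>k/n$. So it suffices to show that the prefactor $B\,(Q(q)-qQ'(q))/q^2$ is strictly positive on $(0,1)$. Then $\phi$ is strictly increasing on $(0,k/n]$ and strictly decreasing on $[k/n,1)$.

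For the prefactor, I would proceed in two steps.

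\begin{itemize}
\item \emph{Strict positivity of $Q-qQ'$.} For $m\ge k+2$, Proposition~\ref{prop:expected-value-algorithm}\ref{characterization-sol-c} gives $Q''(q)<0$ on $(0,1)$. Since $Q(0)=0$, the function $h(q)=Q(q)-qQ'(q)$ satisfies $h(0)=0$ and $h'(q)=-qQ''(q)>0$ on $(0,1)$. Hence $h>0$ on $(0,1]$.
\item \emph{Strict positivity of $B$.} The same strict concavity gives $(k/n)\,Q'(k/n)<Q(k/n)$, whenever $k/n\in(0,1)$. Plugging this into the bound already used in the proof of Theorem~\ref{thm:optimal-solution-PNK} gives $B>0$ rather than merely $B\ge 0$.
\end{itemize}

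The case $k=n$ deserves a remark. There $k/n=1$, and I would check it separately: $\phi$ is then increasing on all of $(0,1)$, so its maximum is at the endpoint $q=1$, which is consistent with the claim. Alternatively, one could note that the statement is only needed for $k<n$.

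Finally, I would handle the endpoints. By definition $\phi(0)=\lim_{q\downarrow 0}\phi(q)=0$, and $\phi$ is continuous on $[0,1]$, being a product of continuous functions away from $0$. Monotonicity on each side of $k/n$ then yields $\phi(q)<\phi(k/n)$ for every $q\neq k/n$. Since $\phi(0)=0<\phi(k/n)$, this gives a unique global maximizer at $k/n$.

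The only delicate points I expect are the strictness of the inequalities (needed for uniqueness) and the degenerate case $k/n=1$; the rest is direct.
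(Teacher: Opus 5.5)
Your argument is correct and is the paper's argument: read off the sign of $\phi'$ from Proposition~\ref{lem:reformulation} and use that $R$ is strictly decreasing. Your strict positivity checks for $B$ and $Q-qQ'$ (the paper only records $Q-qQ'\ge 0$ there), together with the endpoint and $k=n$ remarks, supply rigor the paper leaves implicit. One harmless slip, inherited from the paper: $\lim_{q\downarrow 0}\phi(q)=Q'(0)\,A=mA>0$, not $0$, but your continuity-plus-strict-monotonicity argument on $(0,k/n]$ already gives $\phi(0^+)<\phi(k/n)$, so the conclusion is unaffected.
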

\begin{proof}
By Proposition~\ref{lem:reformulation}, $\phi'(q)$ has the same sign as $R(q) - R(k/n)$. 
If $R$ is strictly decreasing, then for $q < k/n$ we have $R(q) > R(k/n)$, so $\phi'(q) > 0$, and for $q > k/n$ we have $R(q) < R(k/n)$, so $\phi'(q) < 0$. Thus $\phi$ has a unique global maximum at $q = k/n$.
\end{proof}
Therefore, by the previous propositions, to conclude Lemma \ref{prop:supremum-q-d} it is sufficient to show that $R$ is strictly decreasing on $(0,1)$.
To this end, it will be convenient to use integral expressions for $Q'$ and $Q-qQ'$.
We will use that for all $q \in (0,1)$ we have
  \begin{align}
    Q'(q) &= c \int_q^1 w(t)\,dt = c \cdot I(q),\label{integral1}\\
    Q(q) - qQ'(q) &= c \int_0^q t\,w(t)\,dt = c \cdot J(q),\label{integral2}
  \end{align}
  where $c = m(m - 1)$ and $w(q) = \binom{m - 2}{k-1} q^{k-1}(1-q)^{m - k - 1}$;
  in particular, $R(q) = {q^2 I(q)}/{J(q)}$.
  Indeed, to check \eqref{integral1}, observe that
  $Q'(q)=Q'(1)+ \int_{1}^q Q''(t)\,dt=c \int_q^1 w(t)\,dt = c \cdot I(q),$
  where the second equality holds by $Q''(t) = -c w(t)$ (Proposition \ref{prop:expected-value-algorithm}\ref{characterization-sol-c}) and $Q'(1) = 0$ since $m - k - 1 > 0$.
  Similarly, to check \eqref{integral2}, observe that
  \[
  \frac{d}{dq}(Q(q) - qQ'(q)) = Q'(q) - Q'(q) - qQ''(q) = -qQ''(q) = c q w(q),
  \]
  and therefore,
  $Q(q) - qQ'(q) = c \int_0^q t\,w(t)\,dt = c \cdot J(q).$
By computing the derivative of $R(q)$, we get
\begin{align*}
R'(q) &= \frac{1}{J(q)^2}\Big((2qI(q) + q^2 I'(q))J(q) - q^2 I(q) J'(q)\Big)\\
&=\frac{q}{J(q)^2}\Big(2I(q)J(q) - q w(q) J(q) - q^2 w(q) I(q)\Big),
\end{align*}
where we used that $I'(q) = -w(q)$ and $J'(q) = qw(q)$.

Define $h(q) = 2I(q)J(q) - qw(q)J(q) - q^2 w(q) I(q)$. 
Since $q > 0$ and $J(q) > 0$, we conclude that $R'(q) < 0$ if and only if $h(q) < 0$.
Before concluding Lemma \ref{prop:supremum-q-d}, we need one more technical proposition regarding $h$.
\begin{proposition}\label{lem:critical-point}
If $h'(\hat{q}) = 0$ for some $\hat{q} \in (0,1)$, then $h(\hat{q}) < 0$.
\end{proposition}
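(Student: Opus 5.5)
The plan is to compute $h'$ explicitly, use the equation $h'(\hat q)=0$ to eliminate the awkward combination $J+qI$ from $h(\hat q)$, and then close with a log-concavity bound on the tail integral $I$. I write $w'$ for the derivative of $w$. Since $m\ge k+2$, $w$ is positive and smooth on $(0,1)$, and so are $I$ and $J$.

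\emph{Step 1: compute $h'$.} Differentiate $h=2IJ-qwJ-q^2wI$ using $I'=-w$ and $J'=qw$. The terms $\pm 2qwI$ and $\pm q^2w^2$ cancel, which should leave
\[
h'(q)=-3w(q)J(q)-q\,w'(q)\bigl(J(q)+qI(q)\bigr).
\]
This cancellation is routine but must be checked carefully, since the whole argument rests on this form of $h'$.

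\emph{Step 2: use the critical point.} Suppose $h'(\hat q)=0$. Then $\hat q\,w'(\hat q)\bigl(J+\hat qI\bigr)=-3wJ<0$, evaluated at $\hat q$. Since $J+\hat qI>0$, this forces $w'(\hat q)<0$, and it gives
\[
\hat q\,w\,(J+\hat qI)=-\frac{3w^2J}{w'}.
\]
We can also write $h=2IJ-qw(J+qI)$. Substituting at $\hat q$, the claim $h(\hat q)<0$ becomes $2IJ<-3w^2J/w'$. Divide by $J>0$ and multiply by $-w'>0$. The claim is then equivalent to
\[
I(\hat q)\cdot\Bigl(-\frac{w'(\hat q)}{w(\hat q)}\Bigr)<\tfrac32\, w(\hat q).
\]

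\emph{Step 3: tail bound via log-concavity.} This is the step that carries the content. We have
\[
\log w(t)=\text{const}+(k-1)\log t+(m-k-1)\log(1-t),
\]
which is strictly concave on $(0,1)$ because $m-k-1\ge 1$. This holds also when $k=1$. Hence
\[
w(t)\le w(\hat q)\exp\bigl(\lambda(t-\hat q)\bigr)\quad\text{for }t\in[\hat q,1],\qquad \lambda=\frac{w'(\hat q)}{w(\hat q)}<0,
\]
with strict inequality for $t>\hat q$. Integrating from $\hat q$ to $1$ gives
\[
I(\hat q)<\frac{w(\hat q)}{-\lambda}\bigl(1-e^{\lambda(1-\hat q)}\bigr)\le \frac{w(\hat q)}{-\lambda}.
\]
So $I(\hat q)\,(-\lambda)<w(\hat q)<\tfrac32 w(\hat q)$, which proves the claim with room to spare.

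The main obstacle is Step 2: noticing that the critical-point equation lets one replace $J+qI$ by $-3wJ/(qw')$, and that this same equation supplies the sign $w'<0$ needed to flip the inequality. If the derivative in Step 1 came out with different coefficients, I would redo the same elimination. What matters is only that the final condition takes the form $I\cdot(-w'/w)<c\,w$ for some constant $c\ge 1$, which the log-concavity bound handles.
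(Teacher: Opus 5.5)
Your proof is correct. Steps 1--2 coincide with the paper's argument: you use the same formula for $h'$, make the same deduction that $w'(\hat q)<0$ (the paper phrases this as $\ell(\hat q)=w'(\hat q)/w(\hat q)<0$), and reach the same reduced condition $3w(\hat q)+2\ell(\hat q)I(\hat q)>0$. You eliminate $J+\hat q I$ where the paper solves for $J$, which is slightly tidier because you never need the sign of $3+\hat q\ell(\hat q)$.

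The genuine difference is in Step 3. The paper unrolls integration by parts to write $I(q)=w(q)(1-q)\sum_{j=0}^{k-1}a_j x(q)^j$ with $x=(1-q)/q$. It then uses that $\hat q$ lies to the right of the mode, $\hat q\ge (k-1)/(m-2)$, together with $a_j<(k-1)^j/(m-k)^{j+1}$, to dominate the sum by a geometric series. This yields $-\ell(\hat q)I(\hat q)\le w(\hat q)$.

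You get the same inequality, strictly, in one line from log-concavity of the beta kernel: take the tangent-line bound on $\log w$ at $\hat q$ and integrate the resulting exponential. Your route is shorter. It treats $k=1$ uniformly, whereas the paper's mode bound $x(\hat q)\le (m-k-1)/(k-1)$ is written with a division by $k-1$. It also makes clear that the argument only uses that $w$ is log-concave with $\ell(\hat q)<0$. The paper's route is more computational but gives an explicit closed form for $I$. Both arguments leave the same slack: constant $1$ where $3/2$ is needed.
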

Before showing the proof of Proposition \ref{lem:critical-point}, we show how to conclude the lemma.
\begin{proof}[Proof of Lemma \ref{prop:supremum-q-d}]
First, note that by the previous analysis, we checked that $k/n$ is the maximum for $m = k$ and $m = k+1$. For $m \geq k+2$ we will show that $h(q) < 0$ for all $q \in (0,1)$.
We have $h(0) = 0$ since $J(0) = 0$ and $w(0) = 0$. Similarly, $h(1) = 0$ since $I(1) = 0$ and $w(1) = 0$ since $m - k - 1 > 0$. If $h(q_0) > 0$ for some $q_0 \in (0,1)$, then by continuity and differentiability $h$ achieves a positive local maximum at some interior point $\hat{q}$ where $h'(\hat{q}) = 0$ and $h(\hat{q}) > 0$. 
However, Proposition~\ref{lem:critical-point} shows $h(\hat{q}) < 0$, which is a contradiction. 
Since $h(q) < 0$ for all $q \in (0,1)$, we have $R'(q) < 0$, so $R(q)$ is strictly decreasing. 
Proposition \ref{prop:key-reduction} implies that $\phi$ achieves its unique global maximum at $k/n$.
\end{proof}

\begin{proof}[Proof of Proposition \ref{lem:critical-point}]
Computing $h'(q)$ and using that $I'(q) = -w(q)$ and $J'(q) = qw(q)$, we have
\begin{align*}
h'(q) &= 2I'(q)J(q) + 2I(q)J'(q) - w(q)J(q) - qw'(q)J(q) - qw(q)J'(q) \\
&\quad - 2qw(q)I(q) - q^2 w'(q)I(q) - q^2 w(q)I'(q)= -3w(q)J(q) - qw'(q)J(q) - q^2 w'(q)I(q).
\end{align*}
Let $\ell(q) = {w'(q)}/{w(q)} = {(k-1)}/{q} - {(m - k - 1)}/{(1-q)}$. Then
\[
h'(q) = -w(q)\big(3J(q) + q\ell(q)J(q) + q^2 \ell(q)I(q)\big).
\]
Let $\hat{q} \in (0,1)$ be a critical point where $h'(\hat{q}) = 0$. Since $w(\hat{q}) \neq 0$, we have
\begin{equation}\label{eq:critical-cond}
J(\hat{q})(3 + \hat{q}\ell(\hat{q})) = -\hat{q}^2 \ell(\hat{q}) I(\hat{q}).
\end{equation}
We first show that $\ell(\hat{q}) < 0$. 
Suppose, for the sake of contradiction, that $\ell(\hat{q}) \geq 0$.
Then the right-hand side of \eqref{eq:critical-cond} is less or equal to 0, while the left-hand side satisfies $J(\hat{q})(3 + \hat{q}\ell(\hat{q})) > 0$ since $J(\hat{q}) > 0$ and $3 + \hat{q}\ell(\hat{q}) \geq 3$. 
This shows $\ell(\hat{q}) < 0$. Therefore, since $w(q) > 0$ for all $q \in (0,1)$, we have that $w'(\hat{q}) < 0$ because $\ell(q) = w'(q)/w(q)$. 
The function $w(q)$ is a beta density, and therefore, unimodal with mode at $(k-1)/(m - 2) \in (0,1)$, since $m-k-1 > 0$. 
In particular, we have $\hat{q} \geq {(k-1)}/{(m - 2)}$ because only at the right of the mode the beta-type function is decreasing.
From the equality \eqref{eq:critical-cond} we have $J(\hat{q}) = {\hat{q}^2 |\ell(\hat{q})| I(\hat{q})}/{(3 + \hat{q}\ell(\hat{q}))}$, and thus, by substituting into $h(\hat{q})$ we get
\begin{align}
h(\hat{q}) = J(\hat{q})(2I(\hat{q}) - \hat{q}w(\hat{q})) - \hat{q}^2 w(\hat{q}) I(\hat{q}) = \frac{-\hat{q}^2 I(\hat{q})}{3 + \hat{q}\ell(\hat{q})}(2\ell(\hat{q})I(\hat{q}) + 3w(\hat{q})).\label{reduction-h}
\end{align}
In particular, we have $h(\hat{q}) < 0$ if and only if $3w(\hat{q}) + 2\ell(\hat{q})I(\hat{q}) > 0$. 
To verify this inequality, we compute $I(q)$ explicitly via integration by parts. For $r\in \{0, 1, \ldots, k-1\}$, define $I_r(q) = \int_q^1 w_r(s)\,ds$ where $w_r(s) = \binom{m - 2}{k-1} s^{k-1-r}(1-s)^{m - k - 1 + r}$, and $I_0(q) = I(q)$. 
Integrating by parts gives
\[
I_r(q) = \binom{m - 2}{k-1} \frac{q^{k-1-r}(1-q)^{m - k + r}}{m - k + r} + \frac{k-1-r}{m - k + r} I_{r+1}(q).
\]
Note that for $r = k-1$, we have that:
\begin{align*}
I_{k-1}(q) &= \int_q^1 \binom{m - 2}{k-1}t^{k-1-k+1}(1-t)^{m - k - 1 + k - 1}\,dt= \binom{m - 2}{k-1} \frac{(1-q)^{m - 1}}{m - 1}.
\end{align*}
Iterating the recurrence for $I_r(q)$ (i.e., unrolling it up to $r=k-1$) yields
\[
I(q)=\binom{m-2}{k-1}\sum_{j=0}^{k-1}\frac{(k-1)_j}{(m-k)^{(j+1)}}\,q^{k-1-j}(1-q)^{m-k+j}
= w(q)(1-q)\sum_{j=0}^{k-1} a_j\,x(q)^j,
\]
where $a_j=\frac{(k-1)_j}{(m-k)^{(j+1)}}$, $x(q)=\frac{1-q}{q}$ with $(k-1)_j = (k-1)(k-2)\cdots(k-j)$ for $j > 0$, $(k-1)_0 = 1$, and $(m - k)^{(j+1)} = (m - k)\cdots(m - k + j)$ for all $j \geq 0$. 
Then, from the equality in \eqref{reduction-h}, it is sufficient to show that $2\ell(\hat{q})w(\hat{q})(1-\hat{q}) \sum_{j=0}^{k-1} a_j x(q)^j + 3w(\hat{q}) > 0$, or equivalently,
\begin{equation}\label{eq:ineq-label}
2\ell(\hat{q})(1-\hat{q}) \sum_{j=0}^{k-1} a_j x(\hat{q})^j + 3 > 0.
\end{equation}
Since $\hat{q} \ge {(k-1)}/{(m - 2)}$, we have
$x(\hat{q}) = {(1-\hat{q})}/{\hat{q}} \le  {(m - k - 1)}/{(k-1)}.$
Also, note that we can bound $a_j$ as follows:
\[
a_j = \frac{(k-1)_j}{(m - k)^{(j+1)}} = \frac{(k-1)(k-2)\cdots(k-j)}{(m - k)\cdots(m - k + j)} < \frac{(k-1)^j}{(m - k)^{j+1}}.
\]
Since $(k-1)x(\hat{q}) \le m - k - 1 < m - k$ we have ${(k-1)x(\hat{q})}/{(m - k)} < 1$ and $m - k -1 - (k-1)x(\hat{q}) > 0$. Therefore, we have
\[
\sum_{j=0}^{k-1} a_j x(\hat{q})^j \leq \frac{1}{m - k} \sum_{j=0}^{\infty} \left[\frac{(k-1)x(\hat{q})}{m - k}\right]^j = \frac{1}{(m - k) - (k-1)x(\hat{q})} \leq \frac{1}{m - k - 1 - (k-1)x(\hat{q})}.
\]
On the other hand, we have $\ell(\hat{q})(1-\hat{q})=(k-1)x(\hat{q}) - (m - k - 1)$,
so we can conclude \eqref{eq:ineq-label} as follows knowing that $m - k \geq m - k - 1 - (k-1)x(\hat{q}) \geq 0$:
\begin{align*}
2\ell(\hat{q})(1-\hat{q}) \sum_{j=0}^{k-1} a_j x(\hat{q})^j + 3 &= 2((k-1)x(\hat{q}) - (m - k - 1)) \sum_{j=0}^{k-1} a_j x(\hat{q})^j + 3\\
&\geq -2 \frac{(m - k - 1 - (k-1)x(\hat{q}))}{m - k - 1 - (k-1)x(\hat{q})} + 3\geq -2 + 3 = 1,
\end{align*}
i.e., $h(\hat{q}) < 0$. 
This concludes the proof.
\end{proof}

\section{Competition Complexity Bounds}\label{sec:CCbounds}

In this section, we use our Theorem~\ref{thm:optimal-solution-PNK} to characterize the $(1-\varepsilon,k)$-competition complexity $\beta_k(\varepsilon)$.
Recall that $\beta_k(\varepsilon) = \sup_{n \in \NN} \beta_{k,n}(\varepsilon)$, where $\beta_{k,n}(\varepsilon)$ is the infimum ratio $m/n$ (with $m \in \NN$, $m \geq n$) for which the $(m,n,k)$-competitive ratio is at least $1-\varepsilon$. 
We show the following result.
\begin{theorem}\label{thm:betak-epsilon-bounds}
For $k = 1$ we have $\beta_1(\varepsilon) = \ln(1/\varepsilon) $. 
  For $k > 1$ we have 
  \[
    \frac{\ln(1/\varepsilon)}{k}\le \beta_k(\varepsilon) \leq 1 + \frac{2\ln( 1/\varepsilon)}{k-1} + \sqrt{\frac{2\ln( 1/\varepsilon )}{k-1}}.\]
\end{theorem}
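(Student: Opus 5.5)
By Theorem~\ref{thm:optimal-solution-PNK}, $\gamma_{m,n,k}=Q_{m,k}(k/n)/k = 1-\frac1k\EE[(k-Y)_+]$ with $Y\sim\text{Binomial}(m,k/n)$. The whole question therefore reduces to a tail estimate. We need the smallest $m$ such that $\EE[(k-Y)_+]\le \varepsilon k$, uniformly in $n$. Write $m=\beta n$, so that $\EE[Y]=\beta k$. Since $\gamma_{m,n,k}$ is increasing in $m$ ($Q_{m,k}$ is the expected value of a monotone function of a binomial with more trials), $\beta_{k,n}(\varepsilon)$ is just the threshold crossing point. The plan is to compare the binomial with its Poisson limit $Z\sim\text{Poisson}(\beta k)$, which should be the worst case as $n\to\infty$.

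\emph{Case $k=1$.} Here $\EE[(1-Y)_+]=\PP[Y=0]=(1-1/n)^{\beta n}$. The condition becomes $\beta n\ln(1/(1-1/n))\ge \ln(1/\varepsilon)$, so $\beta_{1,n}(\varepsilon)=\ln(1/\varepsilon)/(n\ln(n/(n-1)))$, up to integrality of $m$. Since $n\ln(n/(n-1))\downarrow 1$, the supremum over $n$ is $\ln(1/\varepsilon)$. This gives the exact value; the integer rounding disappears in the limit $n\to\infty$, and one has to check that the sup is attained in the limit rather than at some finite $n$.

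\emph{Case $k>1$, lower bound.} It suffices to exhibit large $n$. We have $\EE[(k-Y)_+]\ge k\PP[Y=0]=k(1-k/n)^{\beta n}\to k e^{-\beta k}$. Requiring this to be at most $\varepsilon k$ forces $\beta\ge \ln(1/\varepsilon)/k$ in the limit, so $\beta_k(\varepsilon)\ge \ln(1/\varepsilon)/k$.

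\emph{Case $k>1$, upper bound (main obstacle).} For every $n$ we need $\EE[(k-Y)_+]\le\varepsilon k$ once $\beta\ge 1+\frac{2L}{k-1}+\sqrt{\frac{2L}{k-1}}$, where $L=\ln(1/\varepsilon)$. I would use $(k-Y)_+\le k\mathds 1_{\{Y<k\}}$, hence $\EE[(k-Y)_+]\le k\PP[Y\le k-1]$, and apply the multiplicative Chernoff lower-tail bound to $\mu=\beta k$:
\[
\PP[Y\le (1-\theta)\mu]\le e^{-\theta^2\mu/2},\qquad (1-\theta)\beta k=k-1 .
\]
This bound is valid for binomials with any $n$, which handles the uniformity. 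It remains to show $\theta^2\beta k/2\ge L$, i.e.\ $(\beta k-k+1)^2\ge 2L\beta k$. With $s=\beta-1$ this reads $(sk+1)^2\ge 2Lk(1+s)$. The stated $\beta$ solves a quadratic of the form $s^2(k-1)\ge 2L(1+s)$, namely $s=\frac{L}{k-1}+\sqrt{(\frac{L}{k-1})^2+\frac{2L}{k-1}}$, which is bounded by the given expression via $\sqrt{a^2+b}\le a+\sqrt b$. The delicate step is matching the constants: whether to use $k-1$ or $k$ as the tail point, and whether to drop the $+1$. I would try the cruder inequality $(sk+1)^2\ge s^2k(k-1)\cdot\frac{k}{k-1}$ and adjust the Chernoff parametrization (e.g.\ threshold $k-1$ against mean $\beta(k-1)$, using the monotonicity of $\PP[Y\le k-1]$ in $\beta$) until the clean quadratic $s^2(k-1)\ge 2L(1+s)$ comes out. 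Any slack here only improves the bound.
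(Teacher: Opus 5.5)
Your proposal is correct. The $k=1$ case and the lower bound follow the paper's argument, and the $k>1$ upper bound takes a shortcut. The step you call delicate already closes as written: if $(k-1)s^2\ge 2L(1+s)$, then $(sk+1)^2\ge k^2s^2\ge k(k-1)s^2\ge 2Lk(1+s)$, so you do not need to re-parametrize the Chernoff bound. Your condition can even be solved exactly. Put $z=\beta k/(k-1)$ and $\theta=L/(k-1)$; then $(\beta k-k+1)^2\ge 2L\beta k$ reads $(z-1)^2\ge 2\theta z$. This means $z\ge 1+\theta+\sqrt{\theta^2+2\theta}$, since the other root is below $1$ and is excluded by $\delta>0$. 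Hence $\beta\ge(1-\frac1k)(1+\theta+\sqrt{\theta^2+2\theta})$ suffices, which is exactly the paper's penultimate expression before it drops the factor $1-\frac1k$.

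Here is how the two routes compare. The paper bounds $\PP[Y\le k-1]\le\exp(t(k-1)-k\beta(1-e^{-t}))$ for a free $t>0$. It then proves $\psi_k$ convex and characterizes its minimizer by $e^{t^*}=1+\theta+t^*$ (Lambert $W$), obtaining $\beta_k\le(1-\frac1k)e^{t^*}$. Only then does it relax, using $z-1-\ln z\ge(z-1)^2/(2z)$. The multiplicative Chernoff bound you invoke is the same exponential-moment bound, evaluated at the $t$ that is optimal for the given $\beta$, namely $e^{t}=1/(1-\delta)=\beta k/(k-1)$. It is followed by the identical relaxation $\delta+(1-\delta)\ln(1-\delta)\ge\delta^2/2$. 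So you reach the same intermediate bound without needing Lemma~\ref{lemma:minimum-t}. The paper's extra work buys two things: the sharper bound $(1-\frac1k)e^{t^*}$, which is the exact optimum of the exponential-moment method, and a single formula $\inf_t\psi_k$ that also yields the $k=1$ upper bound. Your route buys brevity. Your Poisson-comparison heuristic is not needed, because the Chernoff bound is already uniform in $n$, just as $1-u\le e^{-u}$ is in the paper.

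Finally, on rounding: both proofs treat $m/n$ as a continuous scaling. For $k>1$ this is harmless. The stated bound exceeds $(1-\frac1k)(1+\theta+\sqrt{\theta^2+2\theta})$ by at least $1/k\ge 1/n$, which absorbs taking the ceiling of $m$. For $k=1$, the check you flag cannot be carried out literally. For example, $\beta_{1,1}(\varepsilon)=1>\ln(1/\varepsilon)$ when $\varepsilon>1/e$, since $m\ge 1$. However, the paper's equality rests on the same idealization, so this is not a gap relative to the paper.
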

In particular, observe that Theorem \ref{thm:betak-epsilon-bounds} guarantees the existence of a constant $\gamma>0$ such that $\beta_k(\exp(-\gamma k))\le 1.01$, i.e., an extra $1\%$ increase over $n$ ensures that the optimal single-threshold algorithm on a sequence of length $m$ secures a $1-\exp{(-\gamma k)}$ fraction of the prophet’s value---in stark contrast with the case $m=n$, where the best known bound is $\approx 1-1/\sqrt{2\pi k}$.
Naturally, we can replace the 1\% by any constant at the cost of changing the constant $\gamma$.  
The rest of this section is devoted to proving Theorem~\ref{thm:betak-epsilon-bounds}.

Recall that by Theorem \ref{thm:optimal-solution-PNK} and Proposition \ref{prop:expected-value-algorithm}\ref{characterization-sol-a}, the $(m,n,k)$-competitive ratio is achieved by the single-threshold algorithm with quantile $q = k/n$, and its value is equal to $Q_{m,k}(k/n)/k=(k-\EE[(k-Y)_+])/k$, with $Y\sim \normalfont\text{Binomial}(m,k/n)$. Therefore, the competitive ratio bound is equivalent to $\left( k - \EE[(k - Y)_+] \right)/k \geq 1 - \varepsilon$ for all $n \geq k$, where $Y \sim \normalfont\text{Binomial}(m, k/n)$, that is,
  $\EE[(k - Y)_+] \leq k \varepsilon$ for all $n\geq k$.
We start by proving the following proposition.
\begin{proposition}\label{prop:beta-k-growth}
For every $\varepsilon \in (0,1)$ and $k \in \NN$, we have
  \[
    \frac{\ln(1/\varepsilon)}{k} \leq \beta_k(\varepsilon) \leq \inf_{t>0} \psi_k(t, \varepsilon),
  \]
  where $\psi_k(t,\varepsilon) = (t(k-1) + \ln\left({1}/{\varepsilon} \right))/(k(1-e^{-t}))$. 
\end{proposition}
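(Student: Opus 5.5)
By Theorem~\ref{thm:optimal-solution-PNK} and Proposition~\ref{prop:expected-value-algorithm}\ref{characterization-sol-a}, the condition $\gamma_{m,n,k}\ge 1-\varepsilon$ is equivalent to $\EE[(k-Y)_+]\le k\varepsilon$ with $Y\sim\text{Binomial}(m,k/n)$. Both bounds therefore reduce to estimating this binomial shortfall. I will work with $m=\beta n$ and think of the regime $n\to\infty$, where the binomial approaches $\text{Poisson}(\beta k)$.

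\emph{Lower bound.} Since $(k-Y)_+\ge k\,\mathds{1}_{\{Y=0\}}$, we get $\EE[(k-Y)_+]\ge k(1-k/n)^m$. Hence feasibility forces $(1-k/n)^m\le\varepsilon$, that is, $m/n\ge \ln(1/\varepsilon)/(-n\ln(1-k/n))$. As $n\to\infty$ the right-hand side tends to $\ln(1/\varepsilon)/k$. Because $\beta_k(\varepsilon)$ is a supremum over $n$, this gives $\beta_k(\varepsilon)\ge\ln(1/\varepsilon)/k$. Integrality of $m$ only helps here, so no extra care is needed.

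\emph{Upper bound.} I use a Chernoff-type bound. For $t>0$ and any integer $y\ge 0$ we have $(k-y)_+\le k\,e^{t(k-1-y)}$ whenever $y\le k-1$: at $y=k-1$ this reads $1\le k$, and for smaller $y$ the left side grows linearly while the right side grows exponentially, so the inequality persists (I will check this via convexity of $y\mapsto e^{t(k-1-y)}$ together with $1\le k$, or simply bound $(k-y)_+\le k\,\mathds{1}_{\{y\le k-1\}}$). Taking expectations,
\[
\EE[(k-Y)_+]\le k\,e^{t(k-1)}\,\EE[e^{-tY}]=k\,e^{t(k-1)}\bigl(1-\tfrac{k}{n}(1-e^{-t})\bigr)^m\le k\exp\!\Bigl(t(k-1)-\tfrac{mk}{n}(1-e^{-t})\Bigr),
\]
using $1-x\le e^{-x}$. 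This is at most $k\varepsilon$ as soon as $m/n\ge \psi_k(t,\varepsilon)$. This holds for every $n$, but $m$ must be an integer, so $\beta_{k,n}(\varepsilon)\le\lceil n\psi_k\rceil/n$. I expect the step needing the most care to be this rounding. I would handle it by noting that the definition allows the infimum of admissible ratios, so for large $n$ the rounding loss vanishes; for small $n$ I would argue monotonicity in $m$ of $Q_{m,k}$, or else accept the $\lceil\cdot\rceil$ and absorb it, since the paper's definition of $\beta_k$ presumably intends the continuous ratio. Finally, I take the infimum over $t>0$.
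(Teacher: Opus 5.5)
Your argument is correct and is essentially the paper's proof. The lower bound uses $(k-Y)_+\ge k\,\mathds{1}_{\{Y=0\}}$ and lets $n\to\infty$; the upper bound uses $(k-Y)_+\le k\,\mathds{1}_{\{Y\le k-1\}}$, the exponential Markov (Chernoff) bound with $\EE[e^{-tY}]=(1-\tfrac{k}{n}(1-e^{-t}))^m$, and $1-x\le e^{-x}$.

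Your rounding concern is legitimate, and the paper simply treats $m/n$ as a continuous ratio. That is the only reading under which the statement holds, so your monotonicity-in-$m$ patch cannot work. For example, take $k=1$, $n=2$, $\varepsilon=e^{-1.45}$. Then $\gamma_{m,2,1}=1-2^{-m}$ forces $m\ge 3$, so $\beta_1(\varepsilon)\ge 3/2>1.45=\inf_{t>0}\psi_1(t,\varepsilon)$. Your last option, reading $\beta_k$ with a continuous ratio as the paper implicitly does, is the right one.
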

\begin{proof}
  First, we prove the upper bound.
  Consider $\varepsilon \in (0,1)$, $k \in \NN$ and $m,n \in \NN$ such that the $(m,n,k)$-competition complexity is at least $1-\varepsilon$.
  For the random variable $Y \sim \normalfont\text{Binomial}(m, k/n)$, note that $(k - Y)_+ \leq k \mathds{1}_{[0,k-1]}(Y)$, which implies that $\EE[(k - Y)_+] \leq k\cdot \PP[Y \leq k-1].$
  On the other hand, for any $t > 0$, we have
  \begin{align*}
    \PP[Y \leq k-1] &= \PP[\exp{(-tY)} \geq \exp{(-t(k-1))}]\\
    &\leq \exp{(t(k-1))} \EE[\exp{(-tY)}]\\
    &= \exp{(t(k-1))} \Big(1-\frac{k}{n}\Big(1 - e^{-t}\Big)\Big)^{m}\\
    &\leq \exp{(t(k-1))} \exp{\Big(-\frac{k}{n} \Big( 1 - e^{-t} \Big) m\Big)}= \exp\left( t(k-1) - k\frac{m}{n} \Big( 1 - e^{-t}\Big)  \right),
  \end{align*}
  where the first inequality holds by Markov's bound, and the second inequality by $1 - u \leq \exp(-u)$ for $u \geq 0$.
  Then, a sufficient condition for the inequality $\EE[(k - Y)_+] \leq k \varepsilon$ to be satisfied is that $t(k-1) -k ({m}/{n})(1 - \exp(-t)) \leq \ln(\varepsilon),$
  or equivalently,
    ${m}/{n} \geq  \psi_k(t,\varepsilon).$
  Since the bound works for any $t > 0$, we get $\beta_k(\varepsilon) \leq \inf_{t > 0} \psi_k(t,\varepsilon)$, concluding the upper bound. 
  For the lower bound, note that $(k - X)_+ \geq k\mathds{1}_{\{0\}}(X)$, and therefore,
  $\EE[(k - Y)_+] \geq k \cdot \PP[X = 0] = k (1 - {k}/{n})^{m}$.
  Then, $\beta_k(\varepsilon)$ satisfies $\sup_{n \geq k} \left(1 - {k}/{n}\right)^{\beta_k(\varepsilon) n} \leq \varepsilon$.
  Since this sequence is increasing in $n$, the supremum is the limit in $n$, i.e., $\exp{(-\beta_k(\varepsilon) k)} \leq \varepsilon$.
  By rearranging we get $\beta_k(\varepsilon) \geq \ln({1}/{\varepsilon})/k$.
\end{proof}
In the following lemma, we solve the one-dimensional optimization problem in the upper bound of Proposition \ref{prop:beta-k-growth} when $k>1$. 
Namely, we find explicitly the best upper bound for $\beta_k(\varepsilon)$.
\begin{lemma}\label{lemma:minimum-t}
  For every $\varepsilon\in (0,1)$ and every $k>1$ we have
$\inf_{t > 0} \psi_k(t,\varepsilon) = \psi_k(t^*,\varepsilon),$
  where $t^*$ is the unique strictly positive value satisfying
  \begin{equation}\label{eq:t-star-eq}
    e^{t^*} = \frac{\ln({1}/{\varepsilon})}{k-1} + 1 + t^*.
  \end{equation}
  In particular, we have $t^* = \ln( - W_{-1}( - \exp({\ln( {1}/{\varepsilon})}/{(k-1)} - 1)))$.\footnote{$W_{-1}$ is the second branch of the Lambert $W$ function.}
\end{lemma}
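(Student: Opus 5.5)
This is a one-dimensional calculus problem in $t$. Write $L=\ln(1/\varepsilon)>0$ and $a=k-1\ge 1$, so that $\psi_k(t,\varepsilon)=(at+L)/(k(1-e^{-t}))$. The factor $1/k$ does not affect the minimizer, so I will study $N(t)/D(t)$ with $N(t)=at+L$ and $D(t)=1-e^{-t}$ on $(0,\infty)$.

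First I will check the boundary behaviour. As $t\downarrow 0$ we have $N\to L>0$ and $D\to 0^+$, so $\psi_k\to+\infty$. As $t\to\infty$ we have $\psi_k\sim at/k\to\infty$ because $a>0$. This is the only place where $k>1$ is used; for $k=1$ the infimum is the limit $L$ and is not attained. Since $\psi_k$ is continuous, the infimum is attained at an interior critical point.

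Next I will compute the derivative:
\[
\frac{d}{dt}\frac{N}{D}=\frac{aD-ND'}{D^2}=\frac{a(1-e^{-t})-(at+L)e^{-t}}{D^2}.
\]
Multiplying the numerator by $e^{t}/a>0$ shows that the sign of $\psi_k'$ equals the sign of
\[
G(t)=e^{t}-1-t-\frac{L}{a}.
\]
We have $G(0)=-L/a<0$, $G'(t)=e^t-1>0$ for $t>0$, and $G(t)\to\infty$. So $G$ has a unique positive zero $t^*$, which satisfies exactly \eqref{eq:t-star-eq}. Moreover, $\psi_k$ is decreasing on $(0,t^*)$ and increasing on $(t^*,\infty)$, so $\inf_{t>0}\psi_k=\psi_k(t^*,\varepsilon)$. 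This also gives uniqueness without any appeal to the boundary limits.

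For the Lambert form, set $c=L/a+1>1$ and $s=-(c+t^*)$. Then $e^{t^*}=c+t^*$ becomes $e^{-c-s}=-s$, which rearranges to $s e^{s}=-e^{-c}\in(-1/e,0)$. The equation $s e^s=y$ with $y\in(-1/e,0)$ has two real solutions, and $s=-(c+t^*)<-c<-1$ lies on the branch with $s\le -1$, so $s=W_{-1}(-e^{-c})$. Hence $e^{t^*}=c+t^*=-s=-W_{-1}(-e^{-c})$, and $t^*=\ln\bigl(-W_{-1}(-\exp(-\ln(1/\varepsilon)/(k-1)-1))\bigr)$.

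Comparing with the displayed formula, the statement appears to have a sign slip in the exponent: it writes $+\ln(1/\varepsilon)/(k-1)$ where the derivation gives $-\ln(1/\varepsilon)/(k-1)$. I would prove the version derived here and note the correction.

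The only delicate step is this branch selection and its sign bookkeeping; the rest is routine monotonicity.
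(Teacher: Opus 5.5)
Your proof is correct, and for the minimization it takes a different, lighter route than the paper. The paper shows that $\psi_k(\cdot,\varepsilon)$ is convex on $(0,\infty)$. It computes the second derivative of $k\psi_k$, substitutes $x=e^t$, and proves $\mathcal{J}(x)=x(a\ln x+L-2a)+a\ln x+L+2a>0$ on $(1,\infty)$ from $\mathcal{J}(1)=2L>0$ and $\mathcal{J}'(x)\ge L$. It then reads off $t^*$ from the first-order condition. You instead note that the numerator of $\psi_k'$, rescaled by $e^t/a$, is exactly $G(t)=e^t-1-t-L/a$. Since $G$ is strictly increasing with $G(0)<0$ and $G(t)\to\infty$, this one observation gives existence and uniqueness of the positive root of the defining equation, the decreasing-then-increasing shape of $\psi_k$, and global optimality of $t^*$. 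Convexity is a stronger structural fact, but it costs a second-derivative computation and by itself only rules out a second critical point. The existence of $t^*$ claimed in the statement is therefore left implicit in the paper, whereas your argument proves it. As a minor point, dividing by $a$ also uses $k>1$, so the limit at infinity is not the only place that hypothesis enters.

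Your Lambert-$W$ derivation and the correction you propose are also right. With $c=1+\ln(1/\varepsilon)/(k-1)>1$, the equation $e^{t^*}=c+t^*$ gives $(-e^{t^*})\exp(-e^{t^*})=-e^{-c}\in(-1/e,0)$ with $-e^{t^*}<-1$. Hence $t^*=\ln\bigl(-W_{-1}(-\exp(-\ln(1/\varepsilon)/(k-1)-1))\bigr)$. The printed argument $-\exp(\ln(1/\varepsilon)/(k-1)-1)$ lies below $-1/e$, outside the real domain of $W_{-1}$, so the statement does contain a sign typo. It does not affect the rest of the paper, whose bound on $\beta_k(\varepsilon)$ uses only the implicit equation for $t^*$, not the closed form.
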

The proof follows by showing that $\psi_k(t,\varepsilon)$ is convex over the strictly positive reals for every $\varepsilon\in (0,1)$; then \eqref{eq:t-star-eq} is obtained from the first-order optimality condition. 

\begin{proof}[Proof of Lemma \ref{lemma:minimum-t}]
  First we prove that $\psi_k(t,\varepsilon)$ is convex in $t > 0$.
  Just for notation simplicity we work with $\varphi_k(t,\varepsilon) = k \psi_k(t;\varepsilon)$ instead. 
  Then, by denoting $a = k-1$ and $L = \ln\left({1}/{\varepsilon} \right)$, we have
  \begin{align*}
    \varphi_k'(t,\varepsilon) 
    &= \frac{a - ae^{-t} - ate^{-t} - e^{-t}L}{(1-e^{-t})^2},\\
    \varphi_k''(t;\varepsilon) 
    &= \frac{e^{2t}\left( at + L - 2a \right) + e^{t}\left( at + L + 2a \right)}{(e^t - 1)^3}.
  \end{align*}
  We define $x = e^t > 1$ as $t > 0$. Then, note that
    $\varphi_k''(t,\varepsilon) = \varphi_k''(\ln(x),\varepsilon)= \calJ(x)\cdot {x}/{(x-1)^3} ,$
  where $\calJ(x) = x\left( a\ln(x) + L - 2a \right) + a\ln(x) + L + 2a$. 
  Note that ${x}/{(x-1)^3} > 0$ for $x > 1$, and then we only have to prove that $\calJ(x) > 0$ for $x > 1$. 
  Observe that $\calJ(1) = L - 2a + L + 2a = 2\ln\left({1}/{\varepsilon} \right) > 0.$
  Now we show that $\calJ$ is increasing in $(1, \infty)$. 
  By simply taking the derivative, we have
  \begin{align*}
    \calJ'(x) = a\ln(x) + L - 2a + x\cdot \frac{a}{x} + a \cdot \frac{1}{x} = a \left( \ln(x) + \frac{1}{x} - 1 \right) + L \geq L > 0,
  \end{align*}
  where the inequality holds since $\ln(x) + {1}/{x} - 1 \geq 0$ for $x > 0$.
  Therefore, $\calJ(x) \ge \calJ(1)>0$ for $x > 1$, which implies that $\varphi_k''(t,\varepsilon) > 0$ for $t > 0$, concluding that $\varphi_k(t,\varepsilon)$ in the strictly positive reals.
  On the other hand, $\psi_k'(t^*,\varepsilon) = 0$ if and only if
 $e^{t^*} = {L}/{a} + 1 + t^*$,
  i.e., \eqref{eq:t-star-eq} holds. 
  By standard algebraic manipulation of the previous equation, we arrive at the exact closed expression for $t^*$, that is, $t^* = \ln( - W_{-1}( - \exp({\ln( {1}/{\varepsilon})}/{(k-1)} - 1)))$.
\end{proof}

We are now able to prove Theorem~\ref{thm:betak-epsilon-bounds}. 
\begin{proof}[Proof of Theorem \ref{thm:betak-epsilon-bounds}]
  Note that by the Proposition \ref{prop:beta-k-growth}, for $k=1$ we have
  \[
    \ln\left(\frac{1}{\varepsilon}\right) \leq \beta_1(\varepsilon) \leq \inf_{t>0} \frac{\ln(1/\varepsilon)}{1 - e^{-t}} = \ln\left(\frac{1}{\varepsilon}\right),
  \]
  i.e., $\beta_1(\varepsilon) = \ln(1/\varepsilon)$. 
  For $k>1$, Proposition \ref{prop:beta-k-growth} and Lemma \ref{lemma:minimum-t} imply that $\beta_k(\varepsilon) \leq \psi_k(t^*,\varepsilon)$. 
  On the other hand, we have
  \begin{align*}
    \psi_k(t^*,\varepsilon) &= \frac{t^*(k-1) + \ln({1}/{\varepsilon})}{k(1-e^{-t^*})}\\
    &= \frac{(k-1)e^{t^*} - (k-1)}{k(1-e^{-t^*})}= (k-1) \frac{e^{t^*} - 1}{k(1-e^{-t^*})} =  \left(1 - \frac{1}{k}\right) e^{t^*},
  \end{align*}
  where the second equality follows from \eqref{eq:t-star-eq}.
  Also from \eqref{eq:t-star-eq}, we have $z - 1 - \ln(z) = \theta$, where $z=e^{t^*}$ and $\theta = {\ln({1}/{\varepsilon})}/{(k-1)}.$
  Then, we get
  \[
  \theta = z - 1 - \ln(z) = \int_1^z \frac{u-1}{u} du\ge \int_1^z \frac{u-1}{z} du=\frac{(z-1)^2}{2z}.
  \]
  The previous inequality is equivalent to
  $z^2 - 2( 1 + \theta)z + 1 \leq 0,$
  which is a quadratic inequality with positive coefficient, that is, $r_- \leq z \leq r_+$, where $r_{-}$ and $r_+$ are the positive and negative roots of the quadratic equation, respectively. 
  The roots are $r_{\pm} = 1 + \theta \pm \sqrt{\theta^2 + 2\theta}$. 
  For the positive root, we have $z \leq 1 + \theta + \sqrt{\theta^2 + 2\theta}$ and therefore we get the following bound:
  \begin{align*}
  \beta_k(\varepsilon) &\leq  \left(1 - \frac{1}{k}\right) e^{t^*} \\
  &\leq \left(1 - \frac{1}{k}\right) \left( 1 + \theta + \sqrt{\theta^2 + 2\theta} \right)\leq  1 + 2 \theta + \sqrt{2\theta} = 1 +  \frac{2\ln({1}/{\varepsilon} )}{k-1} + \sqrt{\frac{2\ln({1}/{\varepsilon})}{k-1}},
  \end{align*}
  where the second inequality comes from the subadditivity of the square root function. 
  The lower bound on $\beta_k(\varepsilon)$ comes directly from Proposition \ref{prop:beta-k-growth}.
  This concludes the proof.
\end{proof}

\section{Concluding Remarks}\label{sec:conclusion}

We initiated the study of competition complexity for the multiple-selection prophet inequality problem and provided an exact characterization of the $(1-\varepsilon,k)$-competition complexity for single-threshold algorithms. Our results have immediate implications for posted-price mechanisms, in particular for fixed-price mechanisms in multi-item auctions. In line with the insight of Bulow and Klemperer, our findings suggest that increasing market size can be more beneficial than pursuing more sophisticated or optimal mechanisms.

In this work, we focused exclusively on single-threshold algorithms (i.e., static posted-price mechanisms), which cleanly illustrate the contrast between simplicity and complexity. A natural open question is to characterize the competition complexity of the (more complex) optimal online multi-threshold algorithm. We remark that, even in the standard multiple-selection i.i.d.\ prophet inequality problem, characterizing the competitive ratio of the optimal algorithm remains open despite several efforts (see, e.g.,~\citep{jiang2025tightness,brustle2025splitting}). 
Furthermore, even when it might be possible to use the approach of~\cite{brustle2025splitting} to derive a nonlinear system of ordinary differential equations that yields a lower bound on the ratio $\DP_{m,k}(F)/\OPT_{n,k}(F)$ for large $n$ and $m \ge n$, 
obtaining analytical bounds for $k>1$ from this nonlinear system has remained elusive, even in the absence of competition.

Beyond the i.i.d.\ model, it is natural to ask about the competition complexity of general independent---but not necessarily identical---multiple-selection prophet inequality problems. For example, in the adversarial-arrival setting, the competitive ratio of the optimal algorithm is known to be exactly $1 - 1/\sqrt{k+3}$ (see~\cite{alaei2014bayesian}), while for the optimal single-threshold algorithm it is known to be $1 - \mathcal{O}(\sqrt{\ln(k)/k})$ (see, e.g.,~\cite{chawla2024static,hajiaghayi2007automated}). In such models, it is natural to pursue competition complexity in the block model~\cite{brustle2025competition}, where the online decision-maker observes $\ell \ge 1$ independent copies of a sequence of length $n$, i.e., $m = \ell n$ with $\ell$ a nonnegative integer, as opposed to the notion of competition complexity studied in this work. Addressing these settings will likely require new techniques, as our approach relies heavily on the i.i.d.\ assumption. Similar questions may also be explored in the random-order model, i.e., the prophet secretary problem~\cite{arnosti2023tight}.

\bibliographystyle{plain}
{\small\bibliography{references}}

@book{devroye1986non,
  title={Non-Uniform Random Variate Generation},
  author={Devroye, Luc},
  year={1986},
  publisher={Springer-Verlag},
  address={New York}
}

@article{correa2021posted,
  title={Posted price mechanisms and optimal threshold strategies for random arrivals},
  author={Correa, Jos{\'e} and Foncea, Patricio and Hoeksma, Ruben and Oosterwijk, Tim and Vredeveld, Tjark},
  journal={Mathematics of Operations Research},
  volume={46},
  number={4},
  pages={1452--1478},
  year={2021},
  publisher={INFORMS}
}

@article{hill1982comparisons,
  title={Comparisons of stop rule and supremum expectations of iid random variables},
  author={Hill, Theodore P and Kertz, Robert P},
  journal={The Annals of Probability},
  pages={336--345},
  year={1982},
  publisher={JSTOR}
}

@article{brustle2024competition,
  title={The Competition Complexity of Dynamic Pricing},
  author={Brustle, Johannes and Correa, Jos{\'e} and Duetting, Paul and Verdugo, Victor},
  journal={Mathematics of Operations Research},
  volume={49},
  number={3},
  pages={1986--2008},
  year={2024},
  publisher={INFORMS}
}

@article{brustle2025competition,
  title={The Competition Complexity of Prophet Inequalities},
  author={Brustle, Johannes and Correa, Jos{\'e} and D{\"u}tting, Paul and Ezra, Tomer and Feldman, Michal and Verdugo, Victor},
  journal={Mathematics of Operations Research},
  year={2025},
  publisher={INFORMS}
}

@inproceedings{chawla2010multi,
  title={Multi-parameter mechanism design and sequential posted pricing},
  author={Chawla, Shuchi and Hartline, Jason D and Malec, David L and Sivan, Balasubramanian},
  booktitle={Proceedings of the 42nd ACM Symposium on Theory of Computing (STOC)},
  pages={311--320},
  year={2010}
}

@article{correa2019pricing,
  title={From pricing to prophets, and back!},
  author={Correa, Jos{\'e} and Foncea, Patricio and Pizarro, Dana and Verdugo, Victor},
  journal={Operations Research Letters},
  volume={47},
  number={1},
  pages={25--29},
  year={2019},
  publisher={Elsevier}
}

@article{arnosti2023tight,
  title={Tight guarantees for static threshold policies in the prophet secretary problem},
  author={Arnosti, Nick and Ma, Will},
  journal={Operations Research},
  volume={71},
  number={5},
  pages={1777--1788},
  year={2023},
  publisher={INFORMS}
}

@article{perez2025iid,
  title={The iid prophet inequality with limited flexibility},
  author={Perez-Salazar, Sebastian and Singh, Mohit and Toriello, Alejandro},
  journal={Mathematics of Operations Research},
  year={2025},
  publisher={INFORMS}
}

@article{chawla2024static,
  title={Static pricing for multi-unit prophet inequalities},
  author={Chawla, Shuchi and Devanur, Nikhil and Lykouris, Thodoris},
  journal={Operations Research},
  volume={72},
  number={4},
  pages={1388--1399},
  year={2024},
  publisher={Informs}
}

@article{bulow1996auctions,
  title={Auctions versus negotiations},
  author={Bulow, Jeremy and Klemperer, Paul},
  journal={The American Economic Review},
  volume={86},
  number={1},
  pages={180},
  year={1996},
  publisher={American Economic Association}
}

@article{krengel1977semiamarts,
  title={Semiamarts and finite values},
  author={Krengel, Ulrich and Sucheston, Louis},
  journal={Bulletin of the American Mathematical Society},
  volume={83},
  number={4},
  pages={745--747},
  year={1977}
}

@article{correa2019recent,
  title={Recent developments in prophet inequalities},
  author={Correa, Jose and Foncea, Patricio and Hoeksma, Ruben and Oosterwijk, Tim and Vredeveld, Tjark},
  journal={ACM SIGecom Exchanges},
  volume={17},
  number={1},
  pages={61--70},
  year={2019},
  publisher={ACM New York, NY, USA}
}

@article{brustle2025splitting,
  title={Splitting guarantees for prophet inequalities via nonlinear systems},
  author={Brustle, Johannes and Perez-Salazar, Sebastian and Verdugo, Victor},
  journal={Mathematics of Operations Research},
  year={2025},
  publisher={INFORMS}
}

@article{jiang2025tightness,
  title={Tightness without counterexamples: A new approach and new results for prophet inequalities},
  author={Jiang, Jiashuo and Ma, Will and Zhang, Jiawei},
  journal={Mathematics of Operations Research},
  year={2025},
  publisher={INFORMS}
}

@inproceedings{perez2024optimal,
    author = {Perez-Salazar, Sebastian and Verdugo, Victor},
    title = {Optimal Guarantees for Online Selection Over Time},
    booktitle = {Proceedings of the Conference on Web and Internet Economics (WINE)},
    year = 2024
}

@article{wittmann1996superprophet,
  title={Superprophet inequalities for independent random variables},
  author={Wittmann, Rainer},
  journal={Journal of Applied Probability},
  volume={33},
  number={3},
  pages={904--908},
  year={1996},
  publisher={Cambridge University Press}
}

@article{samuel1984comparison,
  title={Comparison of threshold stop rules and maximum for independent nonnegative random variables},
  author={Samuel-Cahn, Ester},
  journal={The Annals of Probability},
  pages={1213--1216},
  year={1984},
  publisher={JSTOR}
}

@inproceedings{hartline2009simple,
  title={Simple versus optimal mechanisms},
  author={Hartline, Jason D and Roughgarden, Tim},
  booktitle={Proceedings of the ACM Conference on Electronic Commerce (EC)},
  pages={225--234},
  year={2009}
}

@inproceedings{feldman201899,
  title={99\% revenue via enhanced competition},
  author={Feldman, Michal and Friedler, Ophir and Rubinstein, Aviad},
  booktitle={Proceedings of the ACM Conference on Economics and Computation (EC)},
  pages={443--460},
  year={2018}
}

@inproceedings{eden2016competition,
  title={The competition complexity of auctions: A bulow-klemperer result for multi-dimensional bidders},
  author={Eden, Alon and Feldman, Michal and Friedler, Ophir and Talgam-Cohen, Inbal and Weinberg, S Matthew},
  booktitle={Proceedings of the ACM Conference on Economics and Computation (EC)},
  pages={343},
  year={2017}
}

@inproceedings{beyhaghi2019optimal,
  title={Optimal (and benchmark-optimal) competition complexity for additive buyers over independent items},
  author={Beyhaghi, Hedyeh and Weinberg, S Matthew},
  booktitle={Proceedings of the 51st ACM Symposium on Theory of Computing (STOC)},
  pages={686--696},
  year={2019}
}

@inproceedings{ezra2025competition,
  title={The competition complexity of prophet inequalities with correlations},
  author={Ezra, Tomer and Garbuz, Tamar},
  booktitle={Proceedings of the ACM Conference on Economics and Computation (EC)},
  pages={226--249},
  year={2025}
}

@article{beyhaghi2021improved,
  title={Improved revenue bounds for posted-price and second-price mechanisms},
  author={Beyhaghi, Hedyeh and Golrezaei, Negin and Leme, Renato Paes and P{\'a}l, Martin and Sivan, Balasubramanian},
  journal={Operations Research},
  volume={69},
  number={6},
  pages={1805--1822},
  year={2021},
  publisher={INFORMS}
}

@inproceedings{chakraborty2010approximation,
  title={Approximation schemes for sequential posted pricing in multi-unit auctions},
  author={Chakraborty, Tanmoy and Even-Dar, Eyal and Guha, Sudipto and Mansour, Yishay and Muthukrishnan, S},
  booktitle={Proceedings of the International Conference on Internet and Network Economics (WINE)},
  pages={158--169},
  year={2010}
}

@article{kertz1986stop,
  title={Stop rule and supremum expectations of iid random variables: a complete comparison by conjugate duality},
  author={Kertz, Robert P},
  journal={Journal of multivariate analysis},
  volume={19},
  number={1},
  pages={88--112},
  year={1986},
  publisher={Elsevier}
}

@article{kleinberg2019matroid,
  title={Matroid prophet inequalities and applications to multi-dimensional mechanism design},
  author={Kleinberg, Robert and Weinberg, S Matthew},
  journal={Games and Economic Behavior},
  volume={113},
  pages={97--115},
  year={2019},
  publisher={Elsevier}
}

@article{ezra2024competition,
  title={The competition complexity of prophet secretary},
  author={Ezra, Tomer and Garbuz, Tamar},
  journal={arXiv preprint arXiv:2411.10892},
  year={2024}
}

@article{alaei2014bayesian,
  title={Bayesian combinatorial auctions: Expanding single buyer mechanisms to many buyers},
  author={Alaei, Saeed},
  journal={SIAM Journal on Computing},
  volume={43},
  number={2},
  pages={930--972},
  year={2014},
  publisher={SIAM}
}

@inproceedings{hajiaghayi2007automated,
  title={Automated online mechanism design and prophet inequalities},
  author={Hajiaghayi, Mohammad Taghi and Kleinberg, Robert and Sandholm, Tuomas},
  booktitle={Annual Conference on Artificial Intelligence (AAAI)},
  volume={7},
  pages={58--65},
  year={2007}
}

@book{rudin1987real,
  title={Real and complex analysis},
  author={Rudin, Walter},
  year={1987},
  publisher={McGraw-Hill, Inc.}
}

\end{document}